\definecolor{Gray}{gray}{0.9}
\newtheorem{theorem}{Theorem}
\newtheorem{prop}{Proposition}
\DeclareMathSymbol{\sminus}{\mathbin}{AMSa}{"39}
\DeclareMathOperator*{\argmin}{arg\,min}
\begin{document}

\title{Topologically Invariant Permutation Test}
\author{Sixtus Dakurah\\
Department of Statistics, University of Wisconsin-Madison\\
\textit{sdakurah@wisc.edu}}

\date{}
\maketitle

\begin{abstract}
Functional brain networks exhibit topological structures that reflect neural organization; however, statistical comparison of these networks is challenging for several reasons. This paper introduces a topologically invariant permutation test for detecting topological inequivalence. Under topological equivalence, topological features can be permuted separately between groups without distorting individual network structures. The test statistic uses $2$-Wasserstein distances on persistent diagrams, computed in closed form. To reduce variability in brain connectivities while preserving topology, heat kernel expansion on the Hodge Laplacian is applied with bandwidth $t$ controlling diffusion intensity. Theoretical results guarantee variance reduction through optimal Hilbert space projection. Simulations across diverse network topologies show superior performance compared to conventional two-sample tests and alternative metrics. Applied to resting-state fMRI data from the Multimodal Treatment of ADHD study, the method detects significant topological differences between cannabis users and non-users.\\

{\small\noindent{\bf Keywords:} Heat Kernel Expansion, Functional Brain Networks, Hodge Laplacian, Topological Signal Processing, Permutation Test}
\end{abstract}

\section{Introduction}

The human brain operates as a complex network of interconnected regions, with neural activity continuously synchronized across spatially distributed areas. Understanding the organizational principles of brain networks is fundamental to neuroscience, as network topology influences cognitive function, develops through learning, and degrades in neurological disease. Functional magnetic resonance imaging (fMRI) has revolutionized our ability to measure these networks non-invasively by quantifying temporal correlations between regional brain activity patterns. These measurements yield functional connectivity networks, mathematical graphs where nodes represent brain regions and edges encode statistical dependencies between their activity time series \cite{bullmore2009complex,hagmann2008mapping}.

Traditional approaches to brain network analysis have primarily employed graph-theoretic measures such as clustering coefficients, path lengths, modularity, and hub identification. While these methods have yielded important insights into small-world organization and network efficiency \cite{newman2018networks}, they fundamentally treat networks as collections of pairwise relationships, potentially missing higher-order organizational features. The topology of a network, its intrinsic shape characterized by connected components, cycles, and voids, represents complementary information that may be critical for understanding brain function. A network's topology is invariant to continuous deformations, making it robust to noise and capturing global organizational principles that local graph metrics may overlook.

Persistent homology, a cornerstone of topological data analysis (TDA), provides mathematical tools for characterizing the multi-scale topological features of data \cite{edelsbrunner2008persistent}. Unlike traditional statistical methods that operate on point clouds or summary statistics, persistent homology tracks how topological features (connected components, loops, and higher-dimensional voids) emerge and disappear as data is examined at different scales \cite{dakurah2025robust,dakurah2024subsequence,dakurah2025maxtda,edelsbrunner2008persistent}. For brain networks, this scale-dependent perspective is particularly relevant: connectivity patterns exhibit hierarchical organization from local circuits to large-scale systems, and pathological changes may manifest at specific organizational scales \cite{dakurah2025discrete,dakurah2022modelling}. The power of persistent homology lies in its compression of complex network structure into persistence diagrams or barcodes. These representations encode the ``birth'' and ``death'' of topological features across a filtration, a nested sequence of networks obtained by thresholding edge weights.

While persistent homology characterizes global topology, analyzing signals defined on network edges such as functional connectivity strengths requires operators that respect the network's combinatorial structure. Traditional graph signal processing methods have been successfully applied to neuroimaging data \cite{hu2016matched,ortega2018graph}, enabling smoothing, denoising, and filtering of signals defined on graph vertices. However, these node-based approaches capture only pairwise interactions and cannot directly process edge-domain signals without auxiliary constructions. The Hodge Laplacian generalizes the graph Laplacian from nodes to higher-dimensional simplices, providing a natural framework for processing edge-domain signals \cite{barbarossa2020topological,schaub2018flow}. For brain connectivity networks, this enables smoothing and denoising operations that regularize connectivity patterns while preserving topological structure \cite{dakurah2025discrete,dakurahregistration}.

A common challenge in neuroimaging is determining whether observed differences between network populations reflect genuine biological variation or mere sampling variability. Standard two-sample tests, such as supremum tests on edge weights and conventional permutation methods, often fail to preserve network topological invariants during resampling, leading to inflated Type I error or reduced power \cite{chung2019rapid,nichols2002nonparametric}. Two complementary strategies address this limitation: heat kernel expansion and topologically invariant permutation. Heat kernel expansion via the Hodge Laplacian admits a spectral decomposition with controllable bandwidth, providing theoretically guaranteed variance reduction while maintaining signal fidelity within the span of leading eigenvectors. The expansion over the $i$-th simplex can be expressed through heat kernel convolution, yielding a spectral representation amenable to degree-$d$ expansion with Hilbert space optimality properties. This approach reduces variability in simplicial complexes, as first demonstrated in \cite{dakurah2025discrete}. Comparison of smoothed networks is then performed through a topologically invariant permutation procedure, wherein birth and death values from persistence diagrams are permuted independently while preserving each network’s intrinsic topology. This resampling scheme produces valid null distributions even for networks with differing global structures. Birth values correspond to the emergence of connected components ($0$-cycles) during filtration, while death values mark the destruction of $1$-cycles. Under the null hypothesis of topological equivalence, these distributions are exchangeable between groups, justifying their separate permutation.

The proposed test statistic combines Wasserstein distances on $0$D and $1$D persistent diagrams, forming a topological loss function that quantifies dissimilarity between network collections. The $2$-Wasserstein distance on persistent diagrams has a closed form for birth and death distributions, enabling efficient computation without optimal transport solvers. Under the null hypothesis of topological equivalence, the test statistic should be relatively small; large values indicate topological inequivalence. The resulting test procedure is sensitive to topological differences while controlling Type I error rates, as we demonstrate through extensive simulations.

This work makes several theoretical and methodological contributions to network neuroscience. First, we introduce a topologically invariant permutation test with rigorous theoretical justification for its validity. By permuting birth sets and death sets independently, the resampling scheme preserves the topological structure of individual networks while generating null distributions appropriate for testing topological equivalence. This addresses a fundamental limitation of conventional permutation tests that may distort network topology during resampling. Second, we demonstrate that heat kernel expansion on the Hodge Laplacian reduces variance while preserving topology, with formal guarantees through Hilbert space projection (Theorem~\ref{thm:theorem-hilbert}). The heat kernel expansion provides optimal approximation in the subspace spanned by low-frequency eigenvectors, with the bandwidth parameter $t$ and expansion degree $d$ offering intuitive control over the smoothing process. We prove that heat kernel smoothing reduces the variability of functional measurements across networks at each edge, enabling improved signal-to-noise ratio without destroying topological features of interest. Third, we provide extensive simulation evidence that our approach outperforms existing methods across diverse network topologies and connectivity patterns. Four comprehensive simulation studies validate: (i) the topologically invariant inference procedure successfully preserves topology while distinguishing networks with different structures (circles, lemniscates, quadrifoliums); (ii) superior performance compared to baseline two-sample $t$-tests, especially for networks with low intramodule connectivity; (iii) enhanced performance gains from heat kernel smoothing; and (iv) advantages over alternative loss functions including $\ell_1-$, $\ell_2-$, $\ell_\infty-$norms, bottleneck distance, and Gromov-Hausdorff distance. The practical significance of these contributions is demonstrated through application to real fMRI data investigating cannabis use effects on brain networks.

The remainder of this paper is organized as follows. Section~\ref{sec:methods} develops the mathematical framework, including simplicial homology, graph filtration, Hodge Laplacians, heat kernel smoothing, and the topologically invariant inference procedure. We present the theoretical foundation for boundary operators, prove key results on Betti numbers and variance reduction, and formalize the permutation test procedure with its topological loss function. Section~\ref{sec:app} presents four comprehensive simulation studies validating the method's performance across diverse scenarios, followed by application to the MTA resting-state fMRI dataset with detailed statistical results. Section~\ref{sec:disc} discusses theoretical implications, methodological advantages for neuroimaging, biological interpretation of findings, limitations, and future directions for topological network analysis in neuroscience.

\section{Methods}
\label{sec:methods}

\subsection{Simplical Complex and Simplicial k-chains} 
A k-simplex is a k-dimensional polytope whose convex hull is made up of its k+1 nodes. A simplicial complex $\mathcal{K}$ is a set of simplices such that for any $\sigma_1, \sigma_2 \in \mathcal{K}$, $\sigma_1 \cap \sigma_2$ is a face of both simplices; and a face of any simplex $\sigma \in \mathcal{K}$ is also a simplex in $\mathcal{K}$ \cite{wang2012basic}.
For some $r_i \in \mathbb{R}$, the the finite sum over the simplices $\sum_{i=1}^N r_i \sigma_i$, $\sigma_i \in \mathcal{K}$ is termed the simplicial k-chain. The set of simplicial k-chains with addition over $\mathbb{R}$ is used to construct boundary maps \cite{topaz2015topological, wang2012basic}.\\

\subsubsection{Boundary Map of Simplicial k-chains} Take any two set of simplicial k-chains $\mathcal{K}_k$ and $\mathcal{K}_{k-1}$, the boundary map 

\begin{equation}
    \partial_k(\sigma): \mathcal{K}_k \longrightarrow \mathcal{K}_{k-1} 
    \label{eqn:init_boundary_map}
\end{equation}
\noindent
for each k-simplex $\sigma = (v_0, ..., v_k)$ is given as: 
\begin{equation}
    \partial_k(\sigma) = \sum_{i=1}^N(-1)^i(v_0, ..., \hat{v_i}, ..., v_k)
    \label{eqn:boundary_map}
\end{equation}
where $(v_0, ..., \hat{v_i}, ..., v_k)$ is the k-1 face of $\sigma$ obtained by deleting the $v_i$ node.\\

\subsubsection{Homology of a Simplicial Complex} Two important components of the boundary map (\ref{eqn:boundary_map}) are its kernel denoted as $\mathcal{Z}_k = ker(\partial_k)$, and its image denoted as $\mathcal{B}_{k} = img(\partial_{k+1})$. These are subspaces of $\mathcal{K}_k$. The elements of $\mathcal{Z}_k$ are known as $k$-cycles and that of $\mathcal{B}_{k}$ are known as $k$-boundaries \cite{hatcher2002algebraic}. In essence $\mathcal{Z}_k$ is the subspace of  $\mathcal{K}_k$ consisting of $k$-chains that are also $k$-cycles and $\mathcal{B}_{k}$ is a subspace of $\mathcal{K}_k$ consisting of $k$-cycles that are also $k$-boundaries.\\
The set quotient $\mathcal{H}_k = \mathcal{Z}_k/\mathcal{B}_{k}$ is termed the $kth$ homology-module and its an $\mathbb{R}$-module. It can be shown that $\mathcal{B}_{k} \subseteq \mathcal{Z}_k$ \cite{hatcher2002algebraic, topaz2015topological}. Hence the difference, can intuitively be thought of as the failure of k-cycles in $\mathcal{K}$ to bound $(k+1)$-simplices, leading to the concept of "holes". A measure of this number of $k$-dimensional holes is termed the Betti number denoted $\beta_{k} = rank(\mathcal{H}_k)$.\\
If we can equip $\mathcal{K}_k$ with the standard $\mathcal{L}^2$ inner product, which will result in the Hilbert space denoted $\mathbb{H}(\mathcal{K}_k)$, and we define a basis for each of these operators, we can abstract the chain operations in terms of matrix multiplication which will allow us to have a matrix representation of the boundary operators.\\

 \subsubsection{Homology of Graphs} 
 
 An unoriented graph can be viewed as a 1-dimensional simplicial complex. Consider the following characterization of a graph $\mathcal{X} = (V, E, w)$. Define $\mathcal{V}$ and $\mathcal{E}$ to be the respective node and edge space  of the graph. The formal sum over the node space
 
\begin{equation}
    \sum_{v\in \mathcal{V}} z_vv, \hspace{1cm} z_v \in\mathbb{Z}
    \label{eqn:hmg_ochains}
\end{equation}
 \noindent
 where $z_v = 0$ for all but a finite $v\in \mathcal{V}$. Note that this sum is a combination of 0-simplices (nodes) and is a 0-chain denoted $\mathcal{K}_0(\mathcal{X})$. Similarly, a formal sum over the directed edge space (combination of 1-simplices) will give the 1-chain denoted $\mathcal{K}_1(\mathcal{X})$. Similar to the construction of homology groups for simplicial complexes, we can define the boundary map
 
 $$\partial_1: \mathcal{K}_1(\mathcal{X}) \longrightarrow \mathcal{K}_{0}(\mathcal{X})$$
 
 Since $img(\partial_2) = 0$, we have that the first homology module $\mathcal{H}_1(\mathcal{X}) = \ker(\partial_1)$. If $\mathcal{X}$ is finite, then $\mathcal{H}_1(\mathcal{X})$ has finite rank, given by the first Betti number $\beta_1$. Elements of $\mathcal{H}_1(\mathcal{X})$ are termed 1-cycles.  
\begin{prop}
    The number of loops (or 1-cycles) in a complete graph $\mathcal{X}$ with $p$ nodes is $\frac{1}{2}p^2 -\frac{3}{2}p + 1$.
\end{prop}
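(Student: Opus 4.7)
The plan is to compute $\beta_1 = \mathrm{rank}\,\mathcal{H}_1(\mathcal{X})$ by applying the rank-nullity theorem to the boundary map $\partial_1 : \mathcal{K}_1(\mathcal{X}) \to \mathcal{K}_0(\mathcal{X})$. Since the graph is viewed as a 1-dimensional simplicial complex there are no 2-simplices, so $\mathrm{img}(\partial_2) = 0$ and, as stated in the excerpt, $\mathcal{H}_1(\mathcal{X}) = \ker(\partial_1)$. Hence
\[
\beta_1 \;=\; \dim \ker(\partial_1) \;=\; \dim \mathcal{K}_1(\mathcal{X}) - \mathrm{rank}(\partial_1).
\]

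First I would count the edges: since $\mathcal{X}$ is the complete graph on $p$ nodes, $\dim \mathcal{K}_1(\mathcal{X}) = |E| = \binom{p}{2}$. Next I would establish the standard identity $\mathrm{rank}(\partial_1) = |V| - c$, where $c$ is the number of connected components of $\mathcal{X}$. The direction $\mathrm{rank}(\partial_1) \le |V| - c$ follows because every column of the matrix of $\partial_1$ (in the node/edge basis) has entries $+1$ and $-1$ summing to zero on the component containing that edge, so the image lies in the codimension-$c$ subspace of $\mathcal{K}_0$ consisting of chains whose coefficients sum to zero on each component. The reverse inequality is obtained by exhibiting, within any spanning tree of a component, $|V_i| - 1$ linearly independent edge images; taking a spanning forest across all components gives $|V| - c$ independent vectors in $\mathrm{img}(\partial_1)$.

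Finally, I would specialize to $K_p$: it is connected, so $c = 1$, and thus
\[
\beta_1 \;=\; \binom{p}{2} - (p - 1) \;=\; \frac{p(p-1)}{2} - p + 1 \;=\; \frac{1}{2}p^2 - \frac{3}{2}p + 1,
\]
which is the claimed count of 1-cycles.

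The only step requiring real care is the rank identity $\mathrm{rank}(\partial_1) = |V| - c$; everything else is a direct substitution. This is a classical fact about the incidence matrix of a graph, but it is the only place where combinatorial structure (spanning trees, connectedness) enters the argument, so I would write it out carefully rather than merely invoking it.
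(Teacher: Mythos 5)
Your proposal is correct, but it reaches the count by a different route than the paper. The paper's proof is a two-line invocation of the Euler--Poincar\'e identity $\chi(\mathcal{X}) = |V| - |E| = \beta_0 - \beta_1$ (cited from the literature), combined with $\beta_0 = 1$ for the connected complete graph, giving $\beta_1 = 1 - p + \binom{p}{2}$ directly. You instead work from first principles: rank--nullity applied to $\partial_1$ gives $\beta_1 = |E| - \mathrm{rank}(\partial_1)$, and you then prove the incidence-matrix rank identity $\mathrm{rank}(\partial_1) = |V| - c$ via the coefficient-sum constraint on each component (for the upper bound) and a spanning forest (for the lower bound). These are really two presentations of the same underlying fact --- the Euler--Poincar\'e formula for a $1$-complex is itself proved by exactly the rank--nullity bookkeeping you carry out, since your argument also yields $\beta_0 = c$ --- but your version is self-contained where the paper's is citational. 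What the paper's approach buys is brevity; what yours buys is a complete proof that does not lean on an external reference, at the cost of having to establish the spanning-tree rank lemma carefully, which you correctly identify as the one step needing real work. Both specializations to $K_p$ ($c=1$, $|E| = \binom{p}{2}$) and the final arithmetic agree.
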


\begin{proof}
Let the complete graph $\mathcal{X}$ has the form $\mathcal{X} = (V, E, w)$. Since $\mathcal{X}$ is complete, it's also connected with 1 connected component. The $0$-$th$ Betti number $\beta_0$ is the number of connected components of $\mathcal{X}$. From the discussion in homology of graphs, the number of 1-cycles will be given by the $1st$ Betti number $\beta_1$.
The Euler characteristic function of $\mathcal{X}$ is given by $\chi(\mathcal{X}) = |V| - |E| = \beta_0 - \beta_1$ \cite{adler2010persistent, chung2019exact}.\\
\noindent
The graph $\mathcal{X}$ is complete $\implies$ $|V| = p$, $|E| = \frac{|V|(|V|-1)}{2} = \frac{p(p-1)}{2}$. Hence 

    $$B_1 = 1 - p + \frac{p(p-1)}{2} = \frac{1}{2}p^2 -\frac{3}{2}p + 1 = \mathcal{O}(p^2)$$
\end{proof}

\subsubsection{Graph Filtration} Given ordered thresholds (filtration values) ${{\epsilon}_0} < ... < {{\epsilon}_k}$, a filtration of the graph $\mathcal{X}$ is a collection of a sequence of nested binary networks \cite{chung2019exact}

\begin{equation}
    \mathcal{X}_{{\epsilon}_0} \supset ... \supset \mathcal{X}_{{\epsilon}_k}
    \label{eqn:binary_network}
\end{equation}
Intuitively, since $\mathcal{X}_0$ is complete, the number of connected components is therefore 1, as we increase threshold $\epsilon$, more edges are disconnected, increasing the number of connected components $(\beta_0)$, and decreasing the number of cycles $(\beta_1)$. More formally, it has been shown that in a graph $\mathcal{X}$, Betti numbers $\beta_0$ and $\beta_1$ are monotone over graph filtration on edge weights \cite{chung2019exact}.\\

\subsubsection{Birth-death Decomposition}
The filtration (\ref{eqn:binary_network}) allows us to track the birth of connected components ($0$-cycles) and the death of $1$-cycles over a given span. Persistence modules, which uniquely decomposes into persistence intervals \cite{zomorodian2005computing} is a tool for tracking this evolution. The persistence of a connected component or 1-cycle that appears at filtration value $b_i$ and disappears at filtration value $d_i$ is represented by the interval $\left[b_i,  d_i\right]$. A finite collection of the persistence intervals can be summarized in the form of a \textit{barcode}.
 Note that when a connected component is born, it never dies, hence it's lifespan is $[b_i, \infty)$. For a graph $\mathcal{X}$ with p nodes, ignoring the $\infty$ death value, the total number of birth values of the complete graph is given by $p-1$. We can represent a collection of the birth values $B(\mathcal{X})$ by the 0D barcode 
$$ B(\mathcal{X}) = b_1  < b_2 < \dots < b_{p-1} $$ 
 Similarly, all the 1-cycles are born when the graph is first formed and it's lifespan is represented as $(-\infty, d_i]$. During the span of the filtration, when an edge is deleted, a 0-cycle is formed or a 1-cycle dies. Both events can not occur at the same time \cite{songdechakraiwut2020topological}.  At the point $d_i$, we take the death value to be the weight of the edge in $\mathcal{X}$ that has been deleted. Since $\mathcal{X}$ is complete, it has $\frac{p(p-1)}{2}$ unique edge weights, hence if we ignore the $-\infty$ birth value, the number of death values is given by\\ $q = \frac{p(p-1)}{2} - (p-1) = \frac{(p-1)(p-2)}{2}$. The death set $D(\mathcal{X})$ can be represented by the 1D barcode 
 $$ D(\mathcal{X}) = d_1 < d_2 < \dots < d_{q} $$
Denote the collection of edge weights of $\mathcal{X}$ by $\mathcal{W}$, then the sets $B(\mathcal{X})$ and $D(\mathcal{X})$ partition $\mathcal{W}$ such that $B(\mathcal{X}) \cup D(\mathcal{X}) = \mathcal{W}$ and $B(\mathcal{X}) \cap D(\mathcal{X}) = \emptyset$.\\

\subsection{Hodge Laplacian} 

Let $\mathbb{B}_k$ and $\mathbb{B}_{k+1}$ be the matrix representation of the boundary operators $\partial_k$ and $\partial_{k+1}$ respectively, the Hodge Laplacian is now defined as follows.
Denote by $\mathcal{L}_k$ the $k$-$th$ Hodge Laplacian. Then we have that 

\begin{equation}
    \mathcal{L}_k = \partial_{k+1}\partial_{k+1}^* + \partial_{k}^*\partial_{k}
    \label{eqn:hodge1}
\end{equation}
which is equivalent to 

\begin{equation}
    \mathcal{L}_k = \mathbb{B}_{k+1}\mathbb{B}_{k+1}^T + \mathbb{B}_{k}^T\mathbb{B}_{k}
    \label{eqn:hodge2}
\end{equation}
\noindent
in the matrix form, which is mainly used for actual computation. The boundary matrix $\mathbb{B}_k$ has the following entries 
\begin{equation}
	(\mathbb{B}_k)_{ij} =
	\begin{cases}
		1, 	& \text{if } \sigma^i_{k-1}  \subset \sigma^j_{k}  ~~\text{and}~~ \sigma^i_{k-1} \sim \sigma^j_{k}\\
		-1, & \text{if } \sigma^i_{k-1}  \subset \sigma^j_{k}  ~~\text{and}~~ \sigma^i_{k-1} \nsim \sigma^j_{k}\\
		0,  & \text{if } \sigma^i_{k-1}  \not \subset \sigma^j_{k}
	\end{cases},
\label{Eq:boundarymatrix}
\end{equation}
It's straightforward to check that, $\mathcal{L}_k$ is a $|\mathcal{K}_k|\times |\mathcal{K}_k|$ matrix, where $|\mathcal{K}_k|$ is the cardinality of k-simplices in $\mathcal{K}_k$. An illustration of the computation of the boundary matrices is shown in Figure \ref{fig:schematic-boundary-operators}. For example, take $k = 0$, then $\mathcal{L}_0 = \mathbb{B}_1\mathbb{B}_1^T$ is the usual graph Laplacian, computed as the difference between the degree matrix and the adjacency matrix. Observe that $\partial_1$ maps from edges to nodes and its matrix representation $\mathbb{B}_1$ relates edges to nodes, the incidence matrix in graph theory literature. When $k = 1$ and the complex is a 1-skeleton, we have that $\partial_2 = 0$ since there are no 2-simplices. Then $\mathcal{L}_1 = \mathbb{B}^T_1\mathbb{B}_1$.\\

\begin{figure}[ht]
 \centering
 \includegraphics[width =\textwidth]{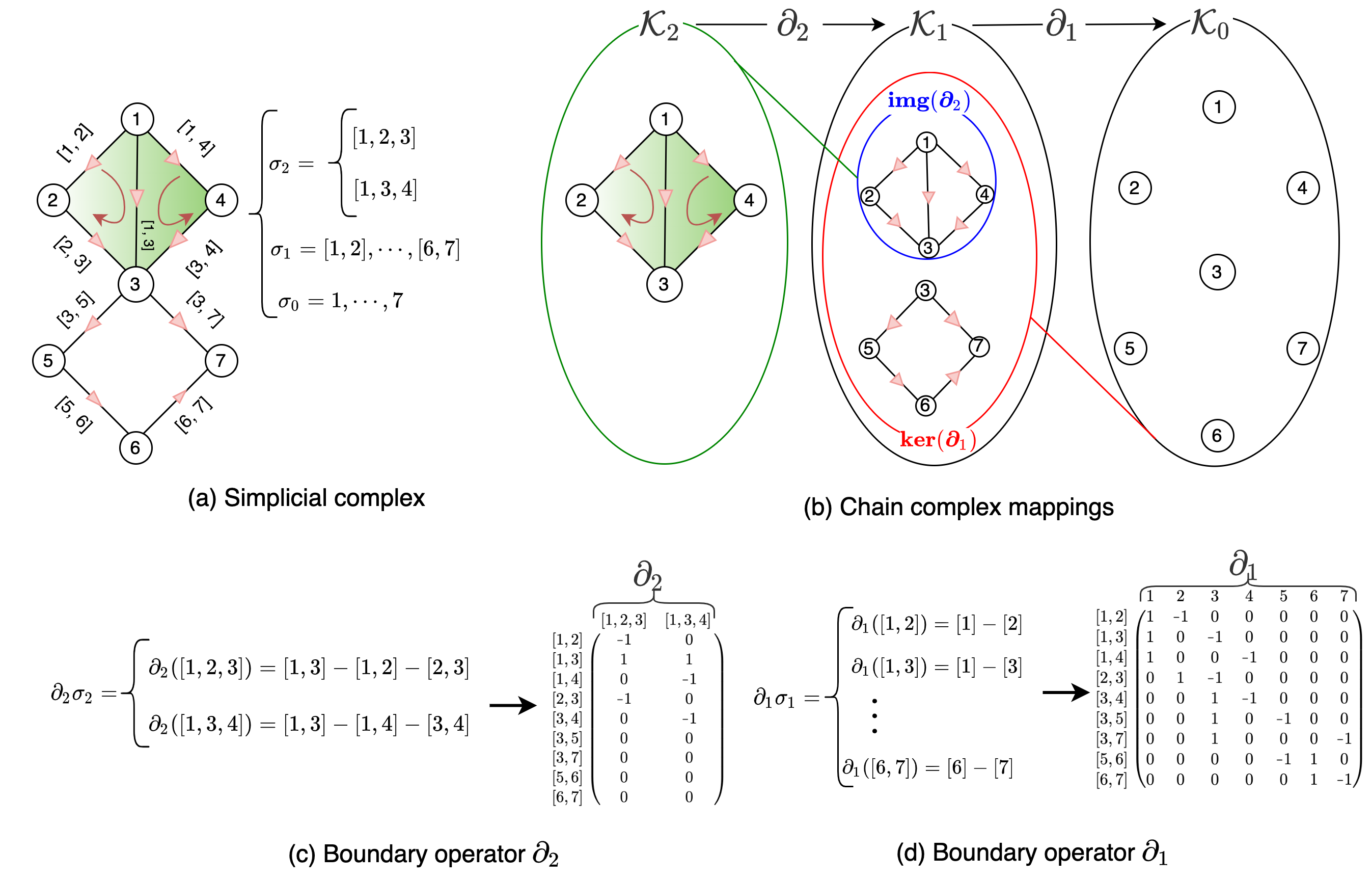}
 \caption{(a) A 2-dimensional simplicial complex. (b) A chain complex mapping with boundary operators that map higher dimensional simplices to lower dimensional simplices. (c) The boundary operator $\partial_2$ acting on 2-simplices $\sigma_2$ to give the matrix form ($\mathbb{B}_2$). (c) The boundary operator $\partial_1$ acting on 1-simplices $\sigma_1$ to give the matrix form ($\mathbb{B}_1$).}
 \label{fig:schematic-boundary-operators}
\end{figure}

Given a collection of $k$-simplices $\mathcal{S}_1, \cdots, \mathcal{S}_n$, these can be combined according to a chosen criteria to form a template simplicial complex denoted $\bar{\mathcal{S}}_k$. The $k$-th Hodge Laplacian $\mathcal{L}_k$ can then be constructed from this template $\bar{\mathcal{S}}_k$. The eigendecomposition of $\mathcal{L}_k$ can be obtained as
\begin{equation} 
\mathcal{L}_k \mathbf{u}_k =  \lambda_k \mathbf{u}_k \label{eqn:eigendecomposition-gen}
\end{equation}
The eigenvectors form an orthonormal basis satisfying $\mathbf{u}_i^{\top} \mathbf{u}_j  = \delta_{ij}$ with Kroneker delta $\delta_{ij}$.

\subsection{Heat kernel expansion through Hodge Laplacian }
Brain images are often denoised to increase the signal-to-noise ratio (SNR) and enhance statistical sensitivity. Denoising induces many nice statistical properties such as variance reduction \cite{chung2017heat}. Brain connectivity matrices are noisy so it is necessary to denoise the matrices as well. 
Heat kernel smoothing can be used to smooth over 1-simplices \cite{anand2021hodge}. The diffusion equation over the $i$-$th$ simplex can be defined as
\begin{equation}
    \frac{\partial}{\partial t} \mathbf{f}_q(t, i)  = - \mathcal{L}_k\mathbf{f}_q(t, i)
    \label{eqn:diffusion-eqn}
\end{equation}
The solution to this equation can be obtained via the heat kernel convolution. Using the eigendecomposition (\ref{eqn:eigendecomposition-gen}), we can state the heat kernel as follows:
\begin{equation}
    K_t(i_1, i_2) = \sum_{j = 0}^m e^{-\lambda_j t}\mathbf{u}_j(i_1)\mathbf{u}_j(i_2)
    \label{eqn:heat-kernel}
\end{equation}
where $t$ is the diffusion bandwidth. Then we have the following spectral representation of the heat kernel smoothing:
\begin{equation}
    K_t*f_q(i) = \sum_{j=0}^m e^{-\lambda_j t}{b}_j \mathbf{u}_j(i)
\end{equation}
Here, ${b}_j = \langle \mathbf{f}_q, \mathbf{u}_j \rangle$ are the Fourier coefficients.
The degree $d$ expansion of the above has a Hilbert space interpretation given in the theorem below.
\begin{theorem}
Let the subspace 

$$\mathcal{U} = \left\{ \sum_{j=0}^d \alpha_j \mathbf{u}_j: \alpha_j \in \mathbb{R}  \right\} \subset \mathcal{L}^2$$
which is spanned by up to the $d$-$th$ degree basis vectors. Let $\mathbf{f}_q$ be the solution to (\ref{eqn:diffusion-eqn}). Then the closest function $\mathbf{u}$ in the subspace $\mathcal{U}$ to $\mathbf{f}_q$ is given by
\begin{equation}
     \argmin_{\mathbf{u}}\|\mathbf{f}_q - \mathbf{u}\|_2^2 = \sum_{j=0}^d e^{-\lambda_j t} b_j \mathbf{u}_j(i)
     \label{eqn:argmin-eq}
\end{equation}
\label{thm:theorem-hilbert}
\end{theorem}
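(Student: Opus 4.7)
The plan is to recognize that this is a standard orthogonal projection statement in disguise: the minimizer of $\|\mathbf{f}_q - \mathbf{u}\|_2^2$ over a subspace spanned by orthonormal vectors is the orthogonal projection of $\mathbf{f}_q$ onto that subspace. The only content specific to this setting is that $\mathbf{f}_q$, being the solution of the diffusion equation (\ref{eqn:diffusion-eqn}), has a known spectral expansion in the full eigenbasis $\{\mathbf{u}_j\}_{j=0}^m$ of $\mathcal{L}_k$, so its projection coefficients are explicit.

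First, I would invoke the spectral representation of heat kernel smoothing already derived just before the theorem, namely $\mathbf{f}_q = K_t * f_q = \sum_{j=0}^m e^{-\lambda_j t} b_j \mathbf{u}_j$ with $b_j = \langle \mathbf{f}_q(0,\cdot), \mathbf{u}_j\rangle$. Any candidate $\mathbf{u}\in\mathcal{U}$ is written as $\mathbf{u} = \sum_{j=0}^d \alpha_j \mathbf{u}_j$. Using the orthonormality condition $\mathbf{u}_i^{\top}\mathbf{u}_j = \delta_{ij}$ stated in (\ref{eqn:eigendecomposition-gen}), the difference $\mathbf{f}_q - \mathbf{u}$ expands into two orthogonal parts — one supported on indices $0,\dots,d$ and one supported on indices $d+1,\dots,m$. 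Applying Parseval/Pythagoras then yields
\begin{equation*}
\|\mathbf{f}_q - \mathbf{u}\|_2^2 \;=\; \sum_{j=0}^{d}\bigl(e^{-\lambda_j t}b_j - \alpha_j\bigr)^2 \;+\; \sum_{j=d+1}^{m}\bigl(e^{-\lambda_j t}b_j\bigr)^2.
\end{equation*}

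Second, I would observe that the second sum is a constant independent of the decision variables $\alpha_0,\dots,\alpha_d$, so minimization decouples coordinate-wise across the first sum. Each term is a nonnegative square, minimized at zero by the choice $\alpha_j = e^{-\lambda_j t} b_j$, which gives the claimed minimizer $\sum_{j=0}^d e^{-\lambda_j t} b_j \mathbf{u}_j$ and completes the proof. Uniqueness follows from strict convexity of the quadratic in $(\alpha_0,\dots,\alpha_d)$.

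There is essentially no hard step here; the argument is a textbook Hilbert-space projection computation. The only subtlety worth flagging is notational: one must read ``$\mathbf{f}_q$ is the solution to (\ref{eqn:diffusion-eqn})'' as referring to the time-$t$ solution $K_t * f_q$ rather than the initial datum, so that the relevant Fourier coefficients are $e^{-\lambda_j t} b_j$ rather than $b_j$. Once that identification is made, orthonormality of the Hodge eigenbasis does all the work, and the theorem reduces to truncating the spectral expansion of $\mathbf{f}_q$ to its first $d+1$ components.
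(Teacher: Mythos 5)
Your proof is correct: the paper states Theorem~\ref{thm:theorem-hilbert} without supplying a proof, and your argument---expanding $\mathbf{f}_q$ in the orthonormal Hodge eigenbasis, splitting $\|\mathbf{f}_q-\mathbf{u}\|_2^2$ by Pythagoras into a tunable part over $j\le d$ and a constant tail, and minimizing coordinate-wise to get $\alpha_j=e^{-\lambda_j t}b_j$---is exactly the standard Hilbert-space projection computation the theorem is implicitly appealing to. Your flag about the notational ambiguity (that $\mathbf{f}_q$ in the theorem must mean the time-$t$ solution $K_t*f_q$ while the $b_j$ are Fourier coefficients of the initial datum) is apt and worth keeping, since the paper overloads the symbol $\mathbf{f}_q$ in precisely that way.
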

The above theorem provides a new framework for performing diffusion smoothing with a series expansion on the Hodge Laplacian. The level of smoothing is then controlled by the bandwidth $t$ over some expansion degree $d$. The parameters $b_j$'s in (\ref{eqn:argmin-eq}) will be estimated in the least squares fashion.\\
If we denote $\mathbb{V}_i \mathbf{f}_q$ to be the variance of functional measurement $\mathbf{f}_q$ across networks at the $i$-th edge, the following results holds.
\begin{prop}
Heat kernel smoothing reduces variability, i.e., 
\begin{equation}
    \mathbb{V}_i \mathbf{f}_q(t) \leq \mathbb{V}_i \mathbf{f}_q
\end{equation}
\end{prop}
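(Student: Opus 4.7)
The plan is to express the smoothed function as a spectral contraction applied to $\mathbf{f}_q$ and exploit the fact that each multiplier $e^{-\lambda_j t}$ lies in $(0,1]$. Concretely, I will work in the orthonormal eigenbasis $\{\mathbf{u}_j\}$ of the Hodge Laplacian $\mathcal{L}_k$ supplied by (\ref{eqn:eigendecomposition-gen}) and propagate the pointwise variance term by term.

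First, I would write the random functional measurement (random because $\mathbf{f}_q$ varies across the sampled networks) in its Fourier expansion $\mathbf{f}_q = \sum_{j=0}^{m} b_j \mathbf{u}_j$ with $b_j = \langle \mathbf{f}_q, \mathbf{u}_j\rangle$, so that the smoothed signal evaluated at edge $i$ reads
\begin{equation*}
\mathbf{f}_q(t,i) \;=\; \sum_{j=0}^{m} e^{-\lambda_j t}\, b_j\, \mathbf{u}_j(i).
\end{equation*}
Second, I would take the cross-network variance at edge $i$ and expand bilinearly,
\begin{equation*}
\mathbb{V}_i \mathbf{f}_q(t) \;=\; \sum_{j,k=0}^{m} e^{-(\lambda_j+\lambda_k)t}\,\mathbf{u}_j(i)\mathbf{u}_k(i)\,\mathrm{Cov}(b_j,b_k),
\end{equation*}
and compare this to the corresponding unsmoothed expression in which the exponential factors are absent. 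Under the standard spectrally isotropic noise model (i.e.\ uncorrelated Fourier modes with common variance $\sigma^2$, which is the working assumption in \cite{chung2017heat}), all off-diagonal covariances vanish and one obtains $\mathbb{V}_i \mathbf{f}_q(t) = \sigma^2 \sum_j e^{-2\lambda_j t}\mathbf{u}_j(i)^2$ and $\mathbb{V}_i \mathbf{f}_q = \sigma^2\sum_j \mathbf{u}_j(i)^2$.

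Third, I would invoke the positive semidefiniteness of $\mathcal{L}_k = \mathbb{B}_{k+1}\mathbb{B}_{k+1}^T + \mathbb{B}_k^T \mathbb{B}_k$, which forces $\lambda_j \geq 0$, hence $e^{-2\lambda_j t} \leq 1$ for every $t \geq 0$. The inequality in the proposition then follows term-by-term, and equality holds only in the harmonic mode (where $\lambda_j = 0$). The main obstacle is the step in which the cross terms $\mathrm{Cov}(b_j,b_k)$ for $j \neq k$ are discarded; this is trivial under spectral whiteness but, if one wishes to allow arbitrary covariance $\Sigma$ of $\mathbf{f}_q$, then one has to argue via the matrix identity $\mathbb{V}_i \mathbf{f}_q(t) = (H_t \Sigma H_t^T)_{ii}$ with $H_t = e^{-\mathcal{L}_k t}$ and establish the diagonal-dominance inequality $(H_t \Sigma H_t^T)_{ii} \leq \Sigma_{ii}$, which requires either commutativity of $\Sigma$ and $H_t$ or an additional contraction argument. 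I expect the paper to adopt the spectrally isotropic setting implicitly, in line with the cited heat-kernel smoothing literature.
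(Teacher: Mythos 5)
The paper states this proposition without supplying a proof, so there is nothing in the text to compare against; your spectral-contraction argument is the standard one from the heat-kernel smoothing literature the paper cites, and it is correct: writing $\mathbf{f}_q(t,i)=\sum_j e^{-\lambda_j t}b_j\mathbf{u}_j(i)$ and using $\lambda_j\ge 0$ gives $e^{-2\lambda_j t}\le 1$ term by term once the cross-covariances are killed. Your closing caveat is the substantive point: the per-edge inequality $(H_t\Sigma H_t^{\top})_{ii}\le\Sigma_{ii}$ is genuinely false for arbitrary $\Sigma$ (only the trace inequality holds in general), so the proposition implicitly requires the spectrally white / uncorrelated-coefficient noise model you invoke, and the paper would be improved by stating that hypothesis explicitly.
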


\subsection{Topologically invariant inference}

We investigate  whether two groups of networks are topologically equivalent. The 0D and 1D barcodes are topological features that completely characterizes a network. The topological similarity or dissimilarity of networks can be assessed based on these barcodes. 
 To this end, we propose a permutation test on the persistent diagrams (PD) of the barcodes that employs a topologically invariant resampling technique. We propose the following resampling procedure which is topologically invariant.\\
 
 Given a collection of networks $\mathcal{N} = \{ \mathcal{X}^1, \mathcal{X}^2, \cdots, \mathcal{X}^n \}$, we compute the birth-death decomposition as outlined in the previous section and obtain the respective birth $B(\mathcal{N})$ and death $D(\mathcal{N})$ sets. For two non-overlapping categorization of $\mathcal{N}$, denote the birth and death sets in the two groups as
$$
\boldsymbol{\xi}_{1} = \{ B(\mathcal{X}^{1}), \cdots, B(\mathcal{X}^{n_{1}}) \} \hspace{0.5cm} \boldsymbol{\zeta}_{1} = \{ D(\mathcal{X}^{1}), \cdots, D(\mathcal{X}^{n_{1}})  \}
$$
$$\boldsymbol{\xi}_{2} = \{ B(\mathcal{X}^{1}), \cdots, B(\mathcal{X}^{n_{2}}) \} \hspace{0.5cm} \boldsymbol{\zeta}_{2} = \{ D(\mathcal{X}^{1}), \cdots, D(\mathcal{X}^{n_{2} })  \}
$$
such that $n_1 + n_2 = n$.
To create the invariant resampling scheme, we permute $\boldsymbol{\xi}_{1}$ with $\boldsymbol{\xi}_{2}$, and similarly, permute $\boldsymbol{\zeta}_{1}$ with $\boldsymbol{\zeta}_{2}$. Denote these permutations as $\boldsymbol{\tau}(\boldsymbol{\xi}_{1}, \boldsymbol{\xi}_{2})$ and $\boldsymbol{\tau}(\boldsymbol{\zeta}_{1}, \boldsymbol{\zeta}_{2})$. We can now build the permutation test to test for topological equivalence using this resampling scheme.\\ 

More formally, take two network collections $\mathcal{N}_1 = \{ \mathcal{X}^{1}, \dots, \mathcal{X}^{m} \}$ and $\mathcal{N}_2 = \{ \mathcal{Y}^{1}, \dots, \mathcal{Y}^{n} \}$, the null hypothesis statement is given as:
\begin{equation}
    H_0: \mathbb{P}_1(\boldsymbol{\xi}_{1}) \equiv \mathbb{P}_1(\boldsymbol{\xi}_{2}), \hspace{0.5cm} \mathbb{P}_2(\boldsymbol{\zeta}_{1}) \equiv  \mathbb{P}_2(\boldsymbol{\zeta}_{2}) \label{eqn:top-hyp}
\end{equation}
where $\mathbb{P}_{1}(.)$ and $\mathbb{P}_{2}(.)$  is the persistent diagram on the 0D and 1D barcodes. Essentially, the 0D persistent diagrams between the two groups are identical and the 1D persistent diagram between the two groups are identical. Under this assumption, we can permute the birth sets between the two groups and the death set between the two groups separately.
Then we define topological loss $\mathcal{L}( \mathcal{X}, \mathcal{Y})$ as the sum of \textit{2-Wasserstein} distances of 0D and 1D persistent diagrams. Let $D_{W}(.)$ denote the Wasserstein distance on the 0D and ID persistent diagrams. The topological loss can be stated as
$$
\mathcal{L}( \mathcal{X}, \mathcal{Y}) = D_{W_0}^2(\mathbb{P}_{1}(\boldsymbol{\xi}_{1}), \mathbb{P}_{1}(\boldsymbol{\xi}_{2})) + D_{W_1}^2( \mathbb{P}_{2}(\boldsymbol{\zeta}_{1}), \mathbb{P}_{2}(\boldsymbol{\zeta}_{2}) ) 
$$
It can be shown that the Wasserstein distance on the respective 0D and 1D persistent diagrams has the form
\begin{equation}
     D_{W_0}^2\left[\mathbb{P}_{1}(\boldsymbol{\xi}_{1}), \mathbb{P}_{1}(\boldsymbol{\xi}_{2})\right] = \sum_{k=0}^{q_0} \left[b_{(k)}^1 - b_{(k)}^2\right]^2
\end{equation}
\begin{equation}
     D_{W_1}^2\left[\mathbb{P}_{1}(\boldsymbol{\zeta}_{1}), \mathbb{P}_{1}(\boldsymbol{\zeta}_{2})\right] = \sum_{k=0}^{q_1} \left[d_{(k)}^1 - d_{(k)}^2\right]^2
\end{equation}
where $b_{(k)}^1$ and $b_{(k)}^2$ are the $k$-$th$ smallest birth values in $\mathbb{P}_{1}(\boldsymbol{\xi}_{1})$ and $\mathbb{P}_{1}(\boldsymbol{\xi}_{2})$ respectively, and similarly, $d_{(k)}^1$ and $d_{(k)}^2$ are the $k$-$th$ smallest death values in $\mathbb{P}_{1}(\boldsymbol{\zeta}_{1})$ and $\mathbb{P}_{1}(\boldsymbol{\zeta}_{2})$ respectively. The test statistic is then constructed to reflect the expected small Wasserstein distance for networks of the same topology and large distance for networks of different topologies. Let $\mathcal{L}_W$ denote the total Wasserstein distance for networks within the same group 
\begin{equation}
\mathcal{L}_W = \frac{\sum_{\mathcal{X} \in \mathcal{N}_1, \mathcal{Y} \in \mathcal{N}_1}  \mathcal{L} (\mathcal{X}, \mathcal{Y}) + \sum_{\mathcal{X} \in \mathcal{N}_2, \mathcal{Y} \in \mathcal{N}_2}  \mathcal{L} (\mathcal{X}, \mathcal{Y})}{\binom{n_1}{2} + \binom{n_2}{2}}
\end{equation}
and $\mathcal{L}_B$, the between-group total Wasserstein distance
\begin{equation}
\mathcal{L}_B= \frac{\sum_{\mathcal{X} \in \mathcal{N}_1, \mathcal{Y} \in \mathcal{N}_2}  \mathcal{L} (\mathcal{X}, \mathcal{Y})}{n_1\times n_2}
\end{equation}
The test statistic is given as the ratio of these two quantities:
\begin{equation}
\mathcal{T}(\mathcal{N}_1, \mathcal{N}_2) = \mathcal{L}_B/\mathcal{L}_W
\label{eqn:total-top-loss-test-statistic}
\end{equation}
Under the null hypothesis of topological equivalence, the test statistic should be relatively small. A large test statistic indicates topological in-equivalence. The inference is done through the permutation test.

\section{Validation}
To validate the proposed methods, we perform a series of simulations to assess the performance of the topological invariant resampling scheme in preserving topological information. The test procedure build on this resampling scheme is also assessed for its performance to distinguish networks of different topologies. We also studied the variance reduction properties of the heat kernel smoothing.

\subsection{Study I: Topologically invariant inference preserves topology}

\begin{figure*}[ht]
     \centering
     \begin{subfigure}[t]{0.328\linewidth}
         \centering
         \includegraphics[width=\textwidth]{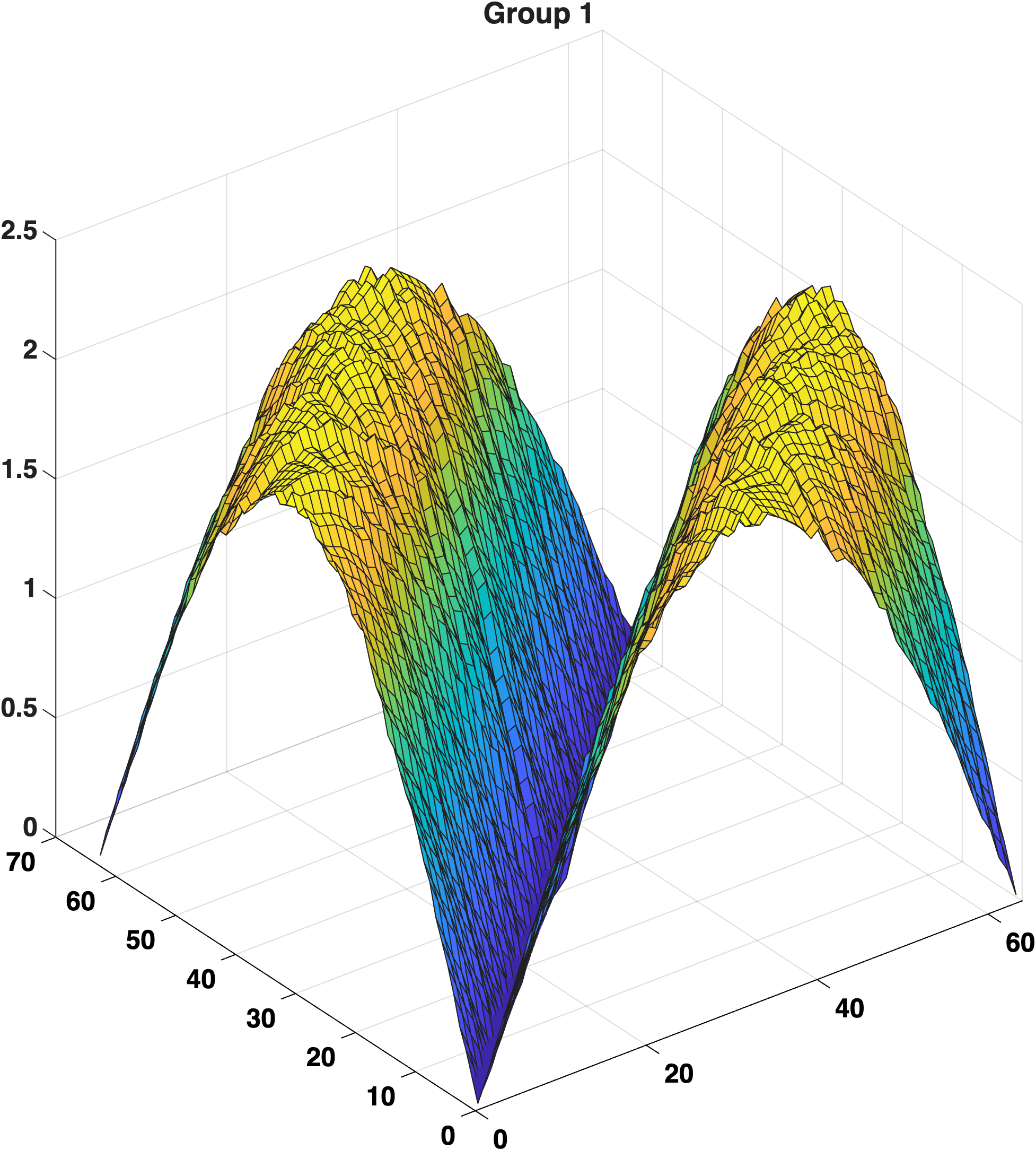}
     \end{subfigure}
     \hfill
     \begin{subfigure}[t]{0.328\linewidth}
         \centering
         \includegraphics[width=\textwidth]{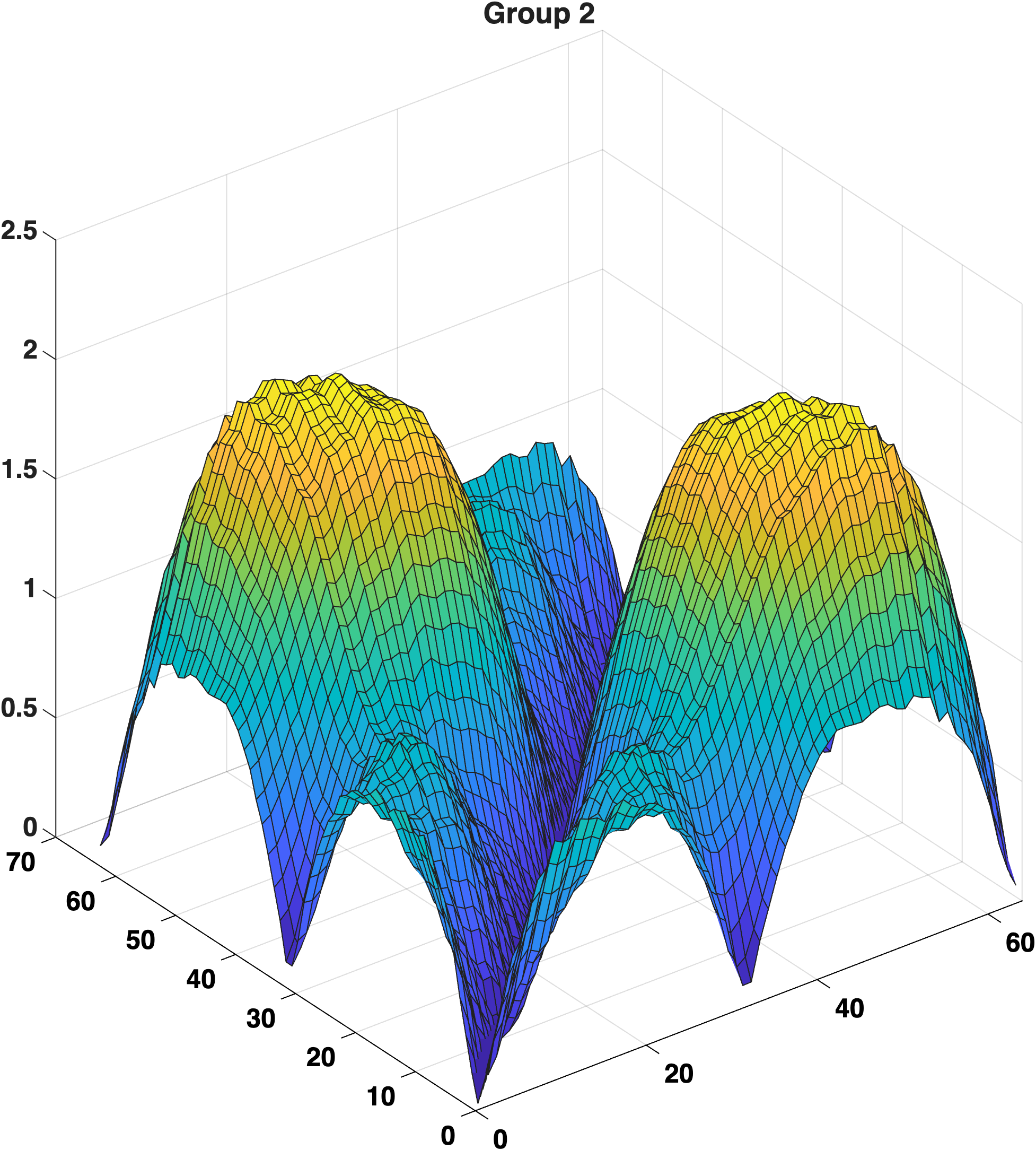}
     \end{subfigure}
     \hfill
     \begin{subfigure}[t]{0.328\linewidth}
         \centering
         \includegraphics[width=\textwidth]{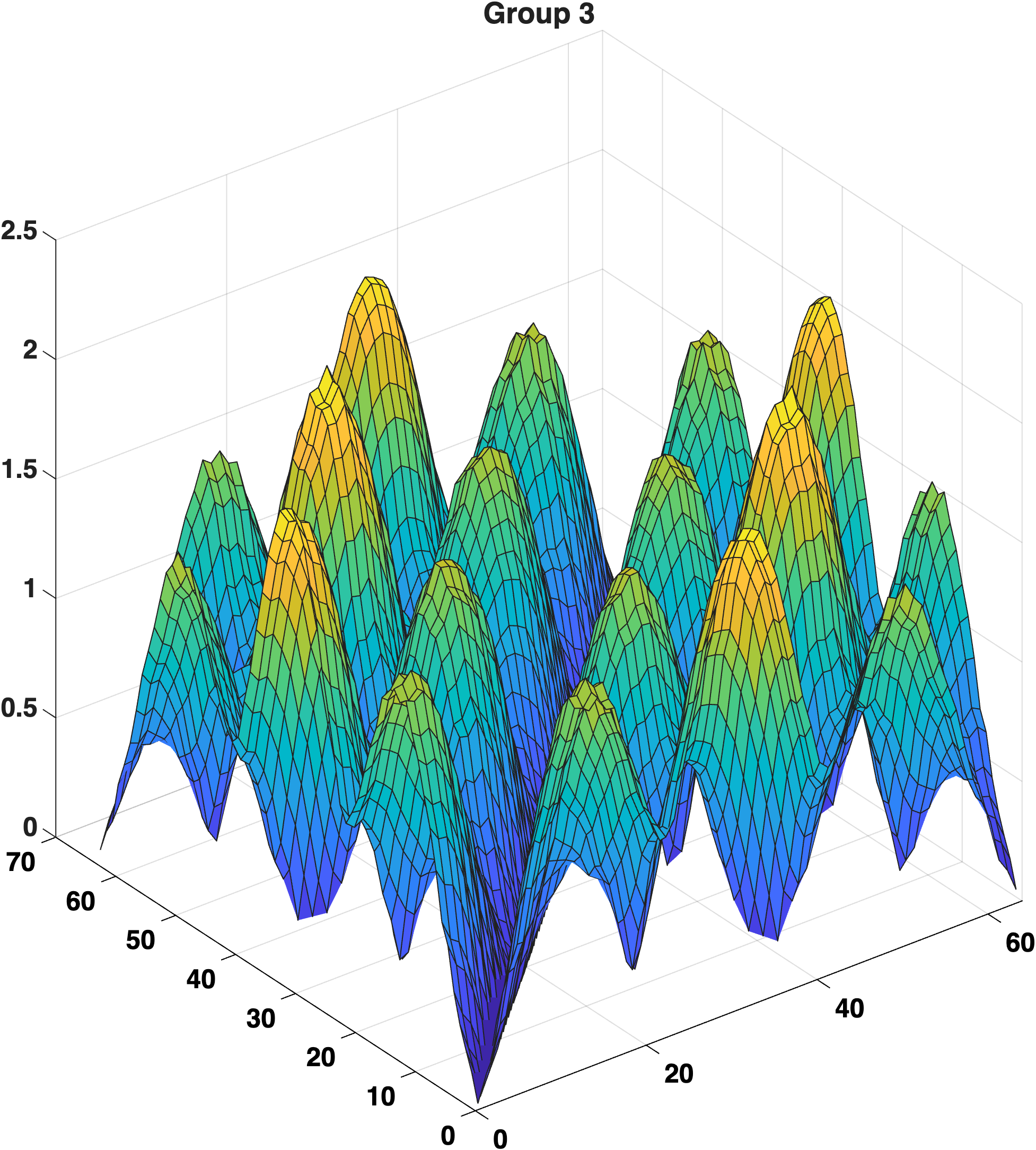}
     \end{subfigure}
    \caption{The three network groups: circle, lemniscate,
quadrifolium used in the study. The Gaussian noise N (0, 0.02) is added to the coordinates of curves.}
    \label{fig:loops}
\end{figure*}

The purpose of this simulation study is to validate that the proposed topologically invariant resampling scheme indeed preserves topology. The three group of networks with different topologies were generated \cite{anand2021hodge} (Figure \ref{fig:loops}). The first group comprises of circles, the second group has shapes of leminiscate, and the third group has shapes of the quadrifolium. The 64 points along these curves were perturbed with random noise $N(0, 0.02)$. We generated from 5 to 100 networks in each group to see how the number of samples affect the performance.

We then employed the proposed topologically invariant resampling scheme with topological loss between groups in Equation~\eqref{eqn:total-top-loss-test-statistic}. When comparing networks of the same topology, the resampling scheme will naturally preserve the topological invariants between the groups and the test procedure should confirm their topological equivalence. For networks with different topologies, the invariant inference procedure facilitates the exchange of corresponding topological invariants between the two groups without distorting their topologies. Within each group, it is expected topology are similar/identical. The test procedure will therefore indicate the topological inequivalence of the two groups. To estimate the p-value in each group comparisons, we used 100,000 permutations. The p-values reported are an average of 50 independent simulations.

\begin{table}[ht]
\caption{The performance results on differentiating networks of different topologies using the topological loss. Different number of networks from 5 to 100 were generated for each group.}
\centering
\begin{tabular}{c|c|c|c|c}
Group & 5 networks & 25 networks & 50 networks & 100 networks\\\hline
1 vs. 2 & 0.0000 (0.0000) & 0.0000 (0.0000) & 0.0000 (0.0000) & 0.0000 (0.0000) \\
1 vs. 4 & 0.0000 (0.0000) & 0.0000 (0.0000) & 0.0000 (0.0000) & 0.0000 (0.0000) \\
2 vs. 4 & 0.0000 (0.0000) & 0.0000 (0.0000) & 0.0000 (0.0000) & 0.0000 (0.0000) \\
\hline
1 vs. 1 & 0.0677 (0.0009) & 0.0399 (0.0006) & 0.1853 (0.0014) & 0.0265 (0.0005) \\
2 vs. 2 & 0.0000 (0.0000) & 0.0498 (0.0006) & 0.0303 (0.0005) & 0.0002 (0.0001) \\
4 vs. 4 & 0.0894 (0.0008) & 0.1007 (0.0009) & 0.0952 (0.0009) & 0.0795 (0.0009) \\
\hline
\end{tabular}
\label{tab:SimulationStudy1-tii}
\end{table}

Table~\ref{tab:SimulationStudy1-tii} summarizes the results for the topologically invariant test procedure. As expected, p-values are  close to 0 when there are topological differences and they are large when there is no topological difference. Further, as the sample size increases, the performance tend to increases. The study demonstrates that the topologically invariant resampling scheme indeed preserves topology and  can accurately distinguish networks of differing topologies with the exception of a few similar topological comparison that gave false positives which can likely be attributed to the random perturbation.

\subsection{Study II: Comparison against baseline two-sample t-test}
We compared the performance of the proposed {\em topologically invariant inference} procedure against the two-sample t-test often employed in brain network analysis\cite{alberton2020multiple,chung2019rapid,lindquist2015zen,nichols2002nonparametric}.
Unlike Study I, we use weighted random modular networks that exhibit more of 0D topology. The process for generating the random modular networks and assigning random weights to generated edges was adapted from \cite{songdechakraiwut2020topological}. We generated a random modular network with $p$ nodes and $K$ modules (Figure \ref{fig:modular-network}). The nodes were distributed evenly among the modules. The weight of any edge incident by two nodes in the same module was generated from $N(1, 0.25)$ with probability $\pi$ or $N(0, 0.25)$ with probability $1-\pi$. Weights of edges incident by two nodes in different modules were generated from $N(0, 0.25)$ with probability $\pi$ or $N(1, 0.25)$ with probability $1-\pi$. Negative edge weights were set to zero.  

\begin{figure}[ht]
 \centering
 \includegraphics[width =\textwidth]{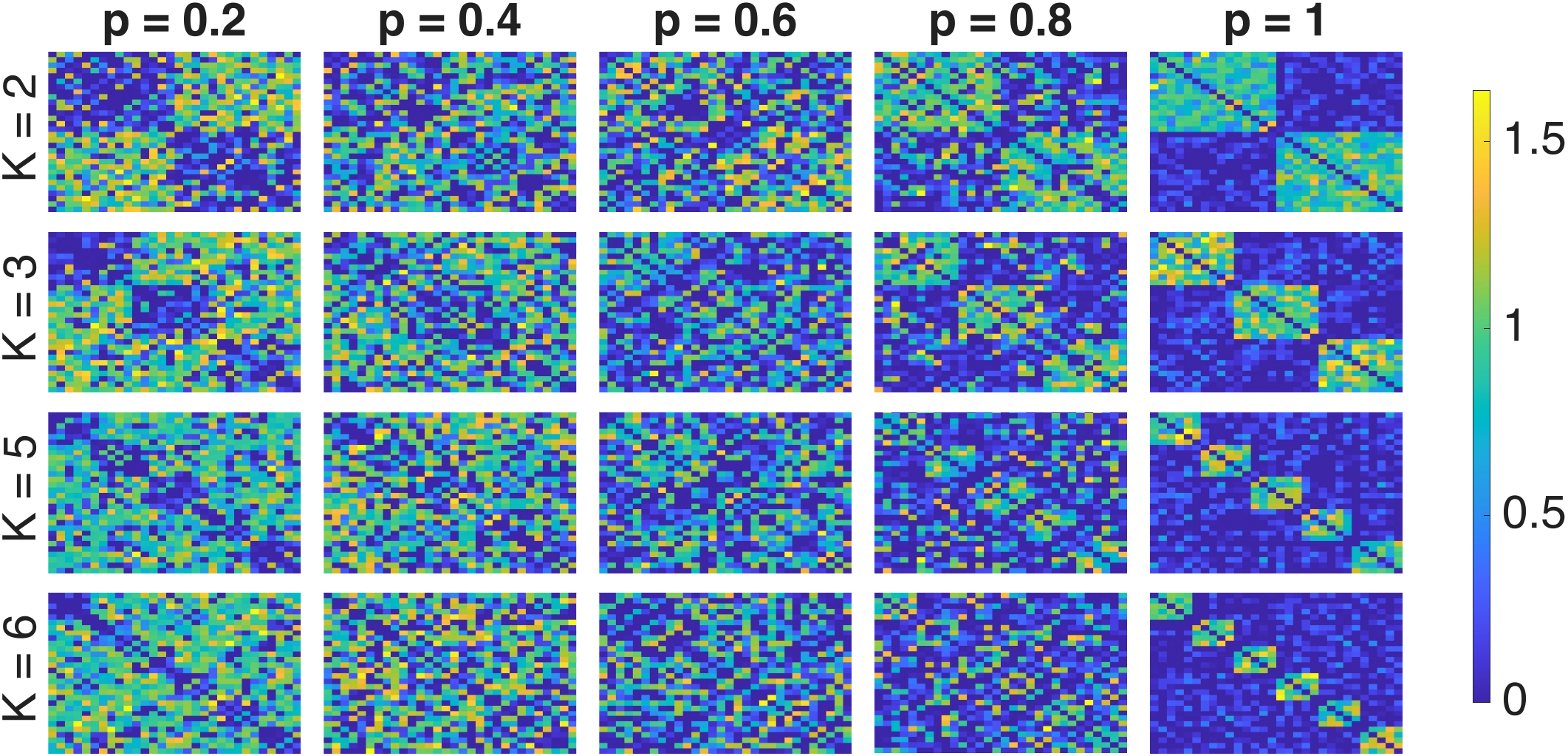}
 \caption{Study II simulation examples. The network modular structure varies as intramodule connectivity probability $\pi$ changes with the number of network modules $K$. The modular structure becomes pronounced as $\pi$ increases. The colorbar displays the edge weights. Edge weights within the same module were drawn from $N(1, 0.25)$ with probability $\pi$ or Gaussian noise $N(0, 0.25)$ with probability $1-\pi$. Analogously, edge weights connecting nodes between different modules have probability $1-\pi$ of being $N(1, 0.25)$ and probability $\pi$ of being $N(0, 0.25)$.}
 \label{fig:modular-network}
\end{figure}

{\bf Topological difference.} Two groups of networks with different modularity structures are compared, i.e, 2 vs. 5, 3 vs. 6. These groups were compared for various intramodule connectivity probabilities. The topology of each group will be different and we expect the {\em topologically invariant inference} procedure to detect the network difference. The results of these comparisons are summarized in Table \ref{tab:SimulationStudy2-diff}. The {\em topologically invariant inference} procedure outperforms the baseline method when distinguishing networks of different topologies, especially with low intramodule connectivity probability. The baseline method returns incorrect results (gray colored rows in Table \ref{tab:SimulationStudy2-diff}) when the intramodule connectivity is low.

\begin{table}[ht!]
\caption{Study 2 \& 3. The performance results summarized as average p-values for comparing various groups of networks. For column K, 2 vs. 2 means we compared a network with 2 modules against another network with 2 modules,  $p$ is the intramodule connectivity probability. TII is an acronym for the {\em topological invariant inference} procedure, BL is the baseline supremum test procedure. TII(HKS) is the performance after heat kernel smoothing.}
\label{tab:SimulationStudy2-diff}
\resizebox{\textwidth}{!}{%
\centering
\begin{tabular}{cc|cccc}
    K & p & BL & TII & TII(HKS) \\ \hline
  \multirow{2}{*}{2 vs. 2} & 0.6 & $0.52 \pm 0.30$ & $0.51 \pm 0.32$ & $0.51 \pm 0.33$ \\
   & 0.8 & $0.54 \pm 0.33$ & $0.53 \pm 0.28$ & $0.54 \pm 0.27$ \\ \hline
  \multirow{2}{*}{5 vs. 5} & 0.6 & $0.47 \pm 0.30$ & $0.54 \pm 0.32$ & $0.51 \pm 0.30$ \\
   & 0.8 & $0.51 \pm 0.32$ & $0.55 \pm 0.31$ & $0.53 \pm 0.30$ \\ \hline
  \cellcolor{white} \multirow{2}{*}{2 vs. 5} & \cellcolor{white} 0.6 & $0.22 \pm 0.21$ & $0.16 \pm 0.24$ & $0.19 \pm 0.27$ \\
   & \cellcolor{white} 0.8 & $0.00 \pm 0.01$ & $0.00 \pm 0.00$ & $0.00 \pm 0.00$ \\ \hline
  
  \cellcolor{white} \multirow{2}{*}{3 vs. 6} & \cellcolor{white} 0.2 & \cellcolor{Gray} $0.58 \pm 0.29$ & \cellcolor{Gray}  $0.01 \pm 0.03$ & \cellcolor{Gray} $0.01 \pm 0.02$ \\
   & \cellcolor{white} 0.4 & $0.59 \pm 0.31$ & $0.35 \pm 0.24$ & $0.36 \pm 0.28$ \\ \hline
  
  \multirow{2}{*}{3 vs. 3} & 0.2 & $0.51 \pm 0.31$ & $0.59 \pm 0.31$ & $0.58 \pm 0.27$ \\
   & 0.4 & $0.52 \pm 0.30$ & $0.57 \pm 0.30$ & $0.53 \pm 0.30$ \\ \hline
  \multirow{2}{*}{6 vs. 6} & 0.2 & $0.54 \pm 0.30$ & $0.59 \pm 0.31$ & $0.58 \pm 0.27$ \\
   & 0.4 & $0.54 \pm 0.30$ & $0.59 \pm 0.34$ & $0.51 \pm 0.32$ \\
 \hline \hline
\end{tabular}
}
\end{table}

{\bf Topological equivalence.}  Networks with similar modularity structures are compared, i.e. 2 vs. 2, 3 vs. 3, 5 vs. 5 and 6 vs. 6. These comparisons were varied based on the intramodule connectivity probability of $\pi = \{0.2, 0.4, 0.6, 0.8\}$. The topology of each group will be similar and we expect the topologically invariant test procedure to detect the network similarity. The results of these comparisons are summarized in Table \ref{tab:SimulationStudy2-diff}. Both the baseline method and the topologically invariant procedure have accurate and comparable performance results when distinguishing networks of the same topologies.

\subsection{Study III: Heat kernel smoothing improves performance}
We assessed and compared the performance of the baseline method and the topological invariant inference procedure after smoothing the networks. In each case the, full set of basis for the network was used in estimating the heat kernel coefficients. The level of smoothing was moderated by a bandwidth of 0.05, which was chosen empirically. We note that in application, the chosen bandwidth will depend on specific level of smoothing desired and the peculiarities of each application. The performance results, summarized in Table \ref{tab:SimulationStudy2-diff} as BL, TII, and TII (HKS) for the baseline and the invariant inference procedure. 
The topological invariant inference procedure has more significant performance gain compared to the baseline method.

\subsection{Study IV: Comparing TII against common loss functions}
We will demonstrate in this study that the topological invariant inference with the topological loss outperforms popular loss functions in the literature. The topological loss was compared to the $l_1, l_2, l_\infty$ norms, adapted as loss functions. The bottleneck, Gromov-Hausdorff (GH) \cite{cohen2007stability} distances were also used as test statistics.

{\bf Topological difference.} We compared networks with different modularity structures, 2 vs. 5 for varying number of nodes and intramodule connectivity probabilities. The topology of each group will be different and we expect each loss function to detect the network difference. The results of these comparisons are summarized in Table \ref{tab:SimulationStudy4-other-loss}. The {\em topologically invariant inference} procedure with the topological loss outperforms all the other loss functions in all but few of the cases, when distinguishing networks of different topologies, especially with low intramodule connectivity probability. The bottle neck and GH distances were the worst performing methods and tend to return incorrect results (gray colored cells in Table \ref{tab:SimulationStudy4-other-loss}) in most of the case.

\begin{table}[ht!]
\caption{Study IV. The performance results summarized as average p-values for comparing various groups of networks. The column v contains the number of nodes used to construct the network in each group.  For column K, 2 vs. 2 means we compared a network with 2 modules against another network with 2 modules,  $p$ is the intramodule connectivity probability. $L_1, L_2, L_\infty$ are loss functions based on the $l_1, l_2, l_\infty$ norms respectively. BN is the bottle neck distance, and GH is the Gromov-Hausdor distance. TII is an acronym for the {\em topological invariant inference} procedure with topological loss. For networks with the different modularity structures, smaller values are preferred. For networks with the same modularity structure, larger values are preferred.}
\label{tab:SimulationStudy4-other-loss}
\renewcommand{\arraystretch}{1.0}
\centering
\resizebox{\textwidth}{!}{%
\begin{tabular}{ccc|ccccccc}
    v & K & p & $L_1$ & $L_2$ & $L_\infty$ & BN & GH & TII \\ \hline
  \multirow{6}{*}{10} & \multirow{2}{*}{2 vs. 2} & 0.6 & $0.56 \pm 0.28$ & $0.54 \pm 0.28$ & $0.49 \pm 0.25$ & $0.15 \pm 0.05$ & $0.43 \pm 0.46$ & $0.58 \pm 0.27$\\ 
  & & 0.8 & $0.39 \pm 0.24$ & $0.39 \pm 0.26$ & $0.34 \pm 0.25$ & $0.13 \pm 0.08$ & $0.31 \pm 0.41$ & $0.45 \pm 0.25$\\ 
  & \multirow{2}{*}{5 vs. 5} & 0.6 & $0.50 \pm 0.28$ & $0.50 \pm 0.29$ & $0.50 \pm 0.27$ & $0.15 \pm 0.09$ & $0.39 \pm 0.43$ & $0.52 \pm 0.29$ \\
  & & 0.8 & $0.55 \pm 0.31$ & $0.54 \pm 0.32$ & $0.52 \pm 0.32$ & $0.13 \pm 0.16$ & $0.33 \pm 0.41$ & $0.52 \pm 0.32$\\ 

  & \cellcolor{white} \multirow{2}{*}{2 vs. 5} & \cellcolor{white}  0.6 & $0.09 \pm 0.12$ & $0.09 \pm 0.11$ & $0.10 \pm 0.13$ & $0.10 \pm 0.10$ & $0.16 \pm 0.33$ & $0.07 \pm 0.08$ \\
  & & 0.8 & $0.00 \pm 0.00$ & $0.00 \pm 0.00$ & $0.00 \pm 0.00$ & $0.07 \pm 0.10$ & $0.14 \pm 0.30$ & $0.00 \pm 0.00$\\ 
 \hline \hline
 
  \multirow{6}{*}{20} & \multirow{2}{*}{2 vs. 2} & 0.6 & $0.54 \pm 0.29$ & $0.54 \pm 0.27$  & $0.55 \pm 0.27$ & \cellcolor{Gray} $0.06 \pm 0.10$ & $0.53 \pm 0.49$ & $0.58 \pm 0.23$\\ 
  & & 0.8 & $0.46 \pm 0.28$ & $0.44 \pm 0.28$ & $0.45 \pm 0.28$ & \cellcolor{Gray} $0.03 \pm 0.12$ & $0.25 \pm 0.42$ & $0.46 \pm 0.29$\\ 
  & \multirow{2}{*}{5 vs. 5} & 0.6 & $0.52 \pm 0.31$ & $0.50 \pm 0.30$ & $0.52 \pm 0.30$ & \cellcolor{Gray} $0.03 \pm 0.12$ & $0.24 \pm 0.43$ & $0.53 \pm 0.31$\\
  & & 0.8 & $0.54 \pm 0.29$ & $0.53 \pm 0.30$ & $0.52 \pm 0.28$ & \cellcolor{Gray} $0.00 \pm 0.00$ & \cellcolor{Gray} $0.02 \pm 0.14$ & $0.56 \pm 0.30$ \\ 

 &  \cellcolor{white} \multirow{2}{*}{2 vs. 5} &  \cellcolor{white} 0.6 & $0.01 \pm 0.02$ & $0.01 \pm 0.02$ & $0.02 \pm 0.05$ & $0.05 \pm 0.13$ & \cellcolor{Gray} $0.15 \pm 0.34$ & $0.00 \pm 0.02$ \\
  & & 0.8 & $0.00 \pm 0.00$ & $0.00 \pm 0.00$ & $0.00 \pm 0.00$ & $0.00 \pm 0.03$ & $0.02 \pm 0.14$ & $0.00 \pm 0.00$\\ 
 \hline \hline
 
  \multirow{6}{*}{30} & \multirow{2}{*}{2 vs. 2} & 0.6 & $0.53 \pm 0.31$ & $0.52 \pm 0.31$ & $0.54 \pm 0.29$ & \cellcolor{Gray} $0.00 \pm 0.00$ & \cellcolor{Gray} $0.06 \pm 0.24$ & $0.53 \pm 0.29$ \\ 
  & & 0.8 & $0.52 \pm 0.32$ & $0.54 \pm 0.34$ & $0.51 \pm 0.33$ & \cellcolor{Gray} $0.00 \pm 0.00$ & $0.10 \pm 0.32$ & $0.60 \pm 0.34$ \\ 
  & \multirow{2}{*}{5 vs. 5} & 0.6 & $0.56 \pm 0.28$ & $0.59 \pm 0.29$ & $0.59 \pm 0.29$ & \cellcolor{Gray} $0.00 \pm 0.00$ & \cellcolor{Gray} $0.03 \pm 0.18$ & $0.61 \pm 0.31$\\
  & & 0.8 & $ 0.51 \pm 0.29$ & $0.50 \pm 0.28$ & $0.49 \pm 0.29$ & \cellcolor{Gray} $0.00 \pm 0.00$ & \cellcolor{Gray} $0.00 \pm 0.00$ & $0.54 \pm 0.27$\\ 

 &  \cellcolor{white} \multirow{2}{*}{2 vs. 5} &  \cellcolor{white} 0.6 & $0.00 \pm 0.01$ & $0.01 \pm 0.02$ & $0.01 \pm 0.04$ & $0.00\pm 0.00$ & $0.00 \pm 0.00$ & $0.00 \pm 0.01$ \\
  & & 0.8 & $0.00 \pm 0.00$ & $0.00 \pm 0.00$ & $0.00 \pm 0.00$ & $0.00 \pm 0.00$ & $0.00 \pm 0.00$ & $0.00 \pm 0.00$\\ 
 \hline \hline
 
\end{tabular}
} 
\end{table}

{\bf Topological equivalence.}  Networks with similar modularity structures are compared, i.e. 2 vs. 2, 5 vs. 5. These comparisons were varied based on the intramodule connectivity probability of $\pi = \{0.6, 0.8\}$. The topology of each group will be similar and we expect each loss function to detect the network similarity. The results of these comparisons are summarized in Table \ref{tab:SimulationStudy4-other-loss}. TII with topological loss performs better than all other loss functions in majority of the case. The bottle neck and GH distances were the worst performing methods and tend to return incorrect results (gray colored cells in Table \ref{tab:SimulationStudy4-other-loss}) in some of the case.

\section{Application to Functional Brain Networks}
\label{sec:app}
The proposed framework is applied to real data with an edge-domain analysis of human resting-state fMRI. The dataset is provided as parcel-wise functional connectivity networks (see figure~\ref{fig:mta-raw-non-use}). The Heat Kernel Expansion regularizes the connectivity signal on the graph while preserving topology. This application illustrates the practical value of edge-domain processing for functional connectomics.
\begin{figure*}[ht]
     \centering
     \begin{subfigure}[t]{0.325\linewidth}
         \centering
         \includegraphics[width=\textwidth]{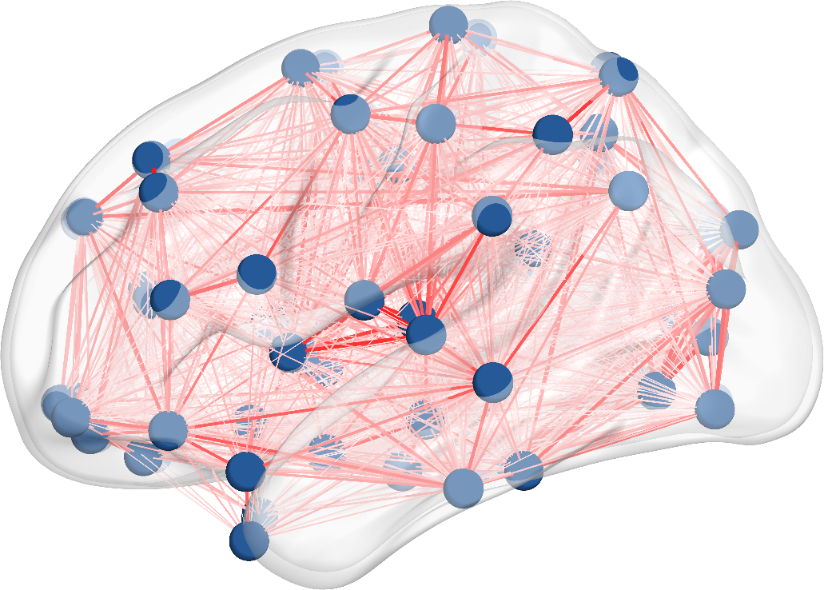}
     \end{subfigure}
     \hfill
     \begin{subfigure}[t]{0.325\linewidth}
         \centering
         \includegraphics[width=0.75\textwidth,height=0.75\textwidth]{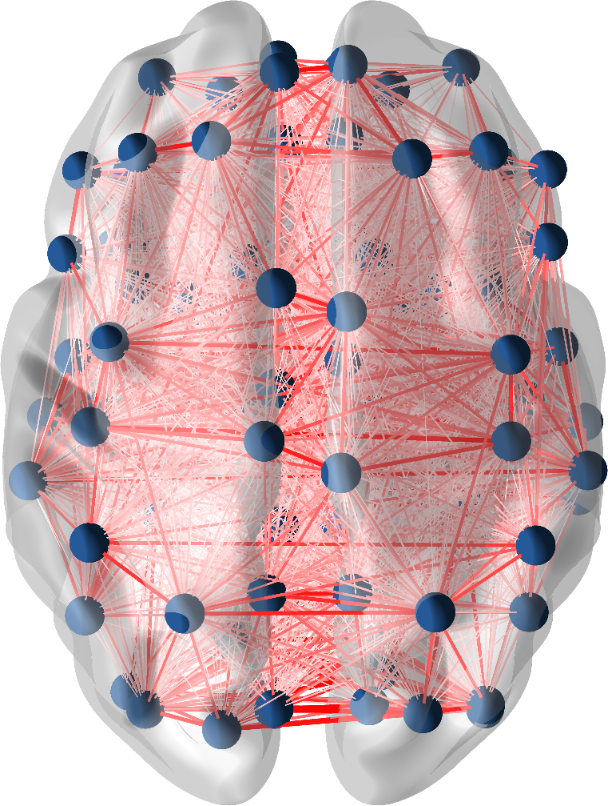}
     \end{subfigure}
     \hfill
     \begin{subfigure}[t]{0.325\linewidth}
         \centering
         \includegraphics[width=\textwidth]{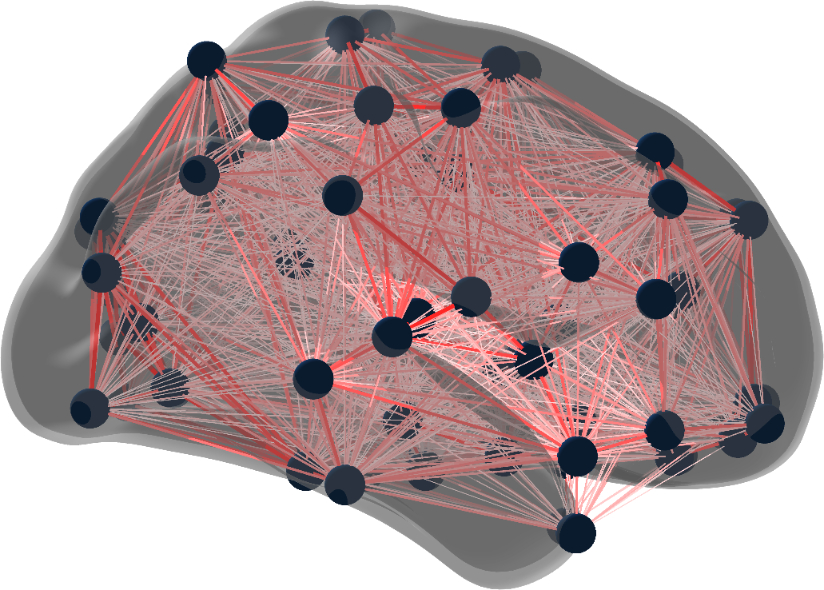}
     \end{subfigure}
     
    \par\vspace{0.5cm}
    
     \centering
     \begin{subfigure}[t]{0.325\linewidth}
         \centering
         \includegraphics[width=\textwidth]{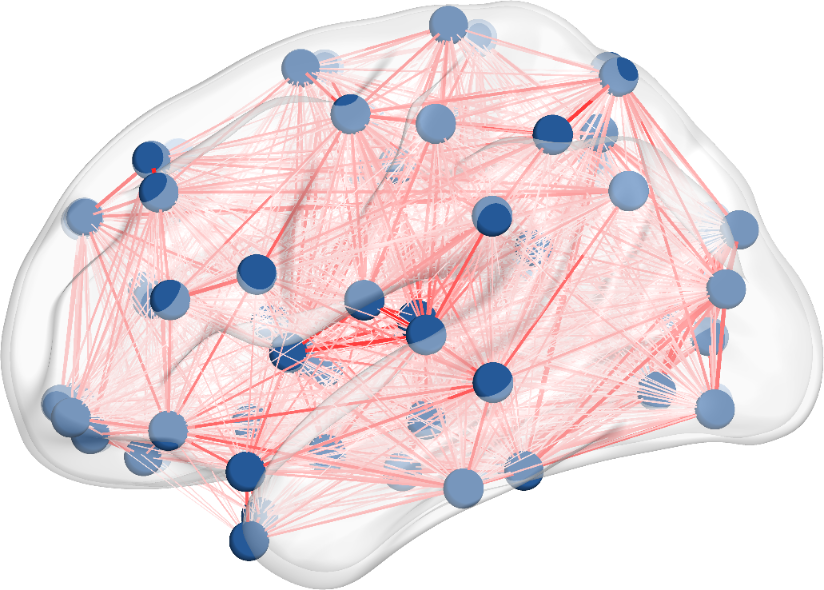}
     \end{subfigure}
     \hfill
     \begin{subfigure}[t]{0.325\linewidth}
         \centering
         \includegraphics[width=0.75\textwidth,height=0.75\textwidth]{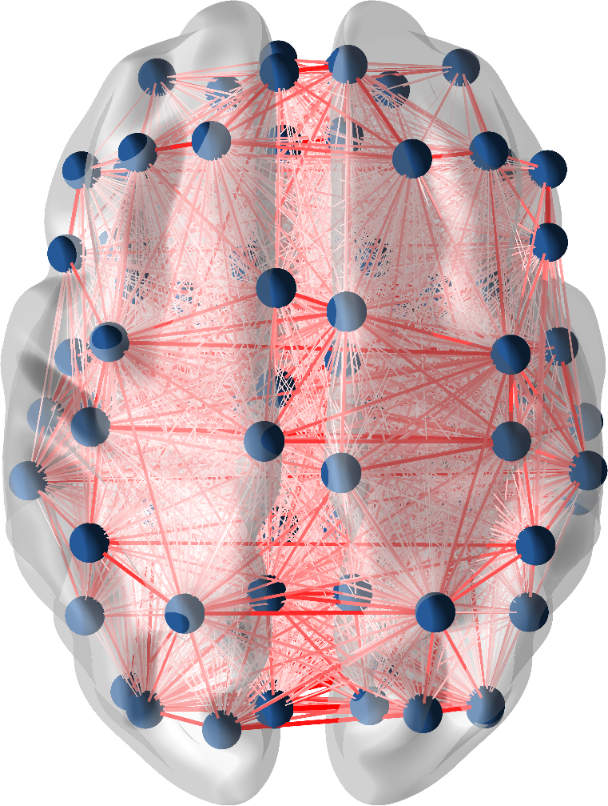}
     \end{subfigure}
     \hfill
     \begin{subfigure}[t]{0.325\linewidth}
         \centering
         \includegraphics[width=\textwidth]{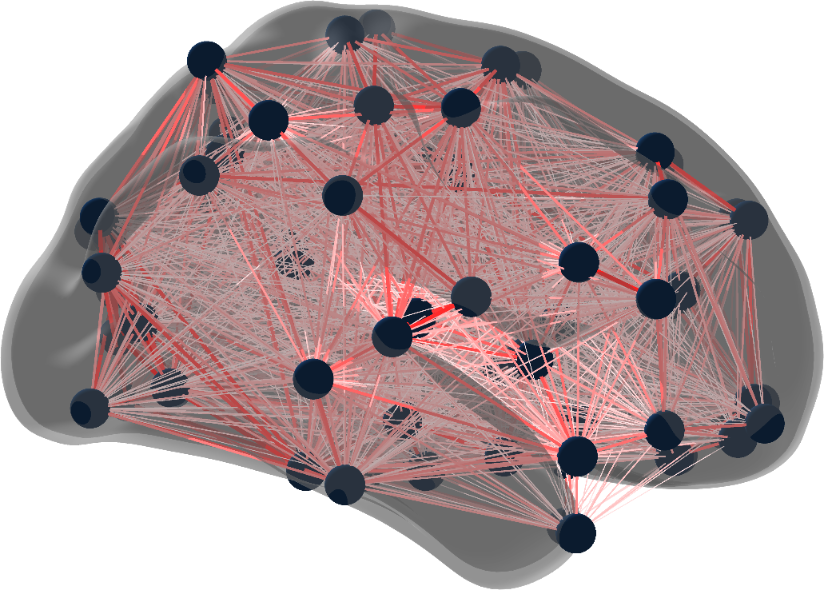}
     \end{subfigure}
      \begin{subfigure}[t]{1\linewidth}
         \centering
         \includegraphics[width=0.75\textwidth]{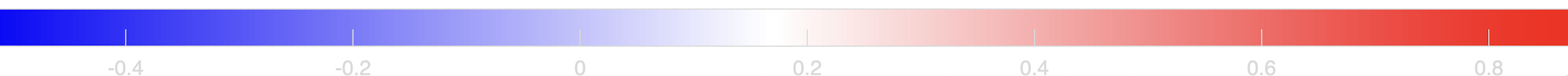}
     \end{subfigure}
    \caption{Top: Group-mean functional connectivity of users. Relatively few negative connections were observed. Bottom: Group-mean functional connectivity of non-users. Relatively few negative connections were observed. The dense representation of the connectivities gives no visibly discernible pattern and might suggest topoligical equivalence.}
    \label{fig:mta-raw-non-use}
\end{figure*}

\subsection{Data and Preprocessing}
The data analyzed is the Addiction Connectome Preprocessed Initiative (ACPI) distribution of the \emph{Multimodal Treatment of ADHD (MTA)} resting-state fMRI resource released through FCP/INDI. This is a derivative, preprocessed release: raw scans were curated by the project and distributed both as standardized 4D volumes and as parcellated region–wise time series. The parcellated time series in the Automated Anatomical Labeling (AAL) atlas was used for this analysis \cite{tzourio2002automated}. The data is in two groups based on marijuana use status, group that regularly use cannabis (users) and another group that did not regularly use cannabis (non-users). 

Preprocessing in ACPI is provided as a small, explicit factorial set of pipelines that differ only in three choices made after standard motion correction and normalization to MNI space. The first factor is the nonlinear spatial normalization tool (\textit{ANTS} or \textit{FNIRT}); the second is whether high–motion volumes were censored; the third is whether global signal regression was applied. Each subject therefore has up to eight variants, all parcellated to the same AAL ROI set. We selected a single variant \emph{a priori} for all subjects to avoid mixing preprocessing choices across groups. For each subject we transformed the ROI timeseries by removing ROI means, and computing the Pearson correlation between every ROI pair to obtain a symmetric AAL connectivity matrix. Correlations were Fisher $z$–transformed prior to group averaging and back–transformed for reporting and visualization. No additional denoising, filtering, spatial smoothing, or regression was performed beyond the ACPI pipeline settings. This design keeps the preprocessing provenance explicit and comparable across subjects while providing standard parcel–level functional networks for the subsequent edge–domain analyses.

\begin{figure*}[ht]
     \centering
     \begin{subfigure}[t]{0.325\linewidth}
         \centering
         \includegraphics[width=\textwidth]{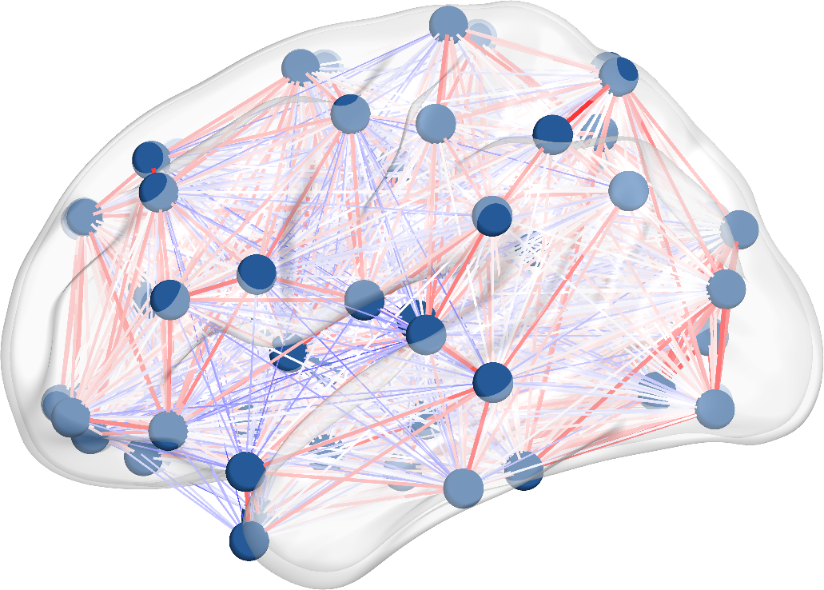}
     \end{subfigure}
     \hfill
     \begin{subfigure}[t]{0.325\linewidth}
         \centering
         \includegraphics[width=0.75\textwidth,height=0.75\textwidth]{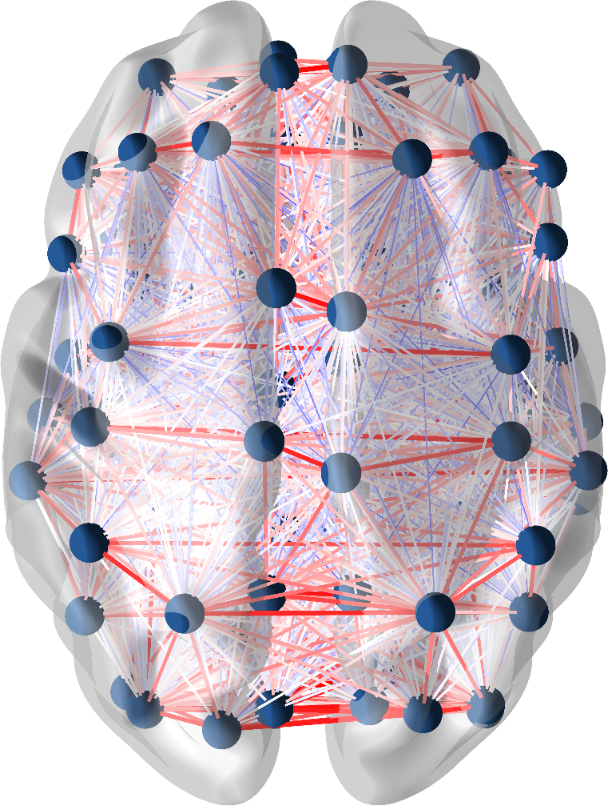}
     \end{subfigure}
     \hfill
     \begin{subfigure}[t]{0.325\linewidth}
         \centering
         \includegraphics[width=\textwidth]{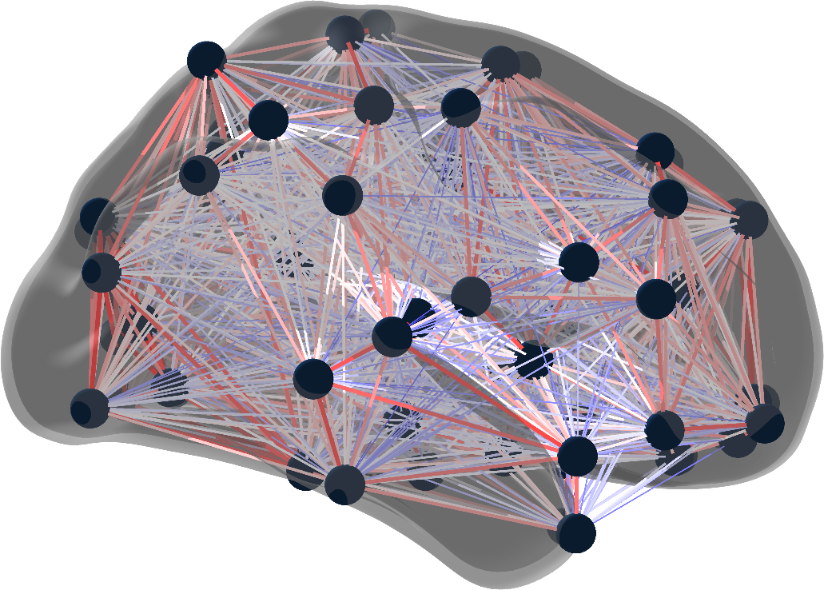}
     \end{subfigure}
     
    \par\vspace{0.5cm}
    
     \centering
     \begin{subfigure}[t]{0.325\linewidth}
         \centering
         \includegraphics[width=\textwidth]{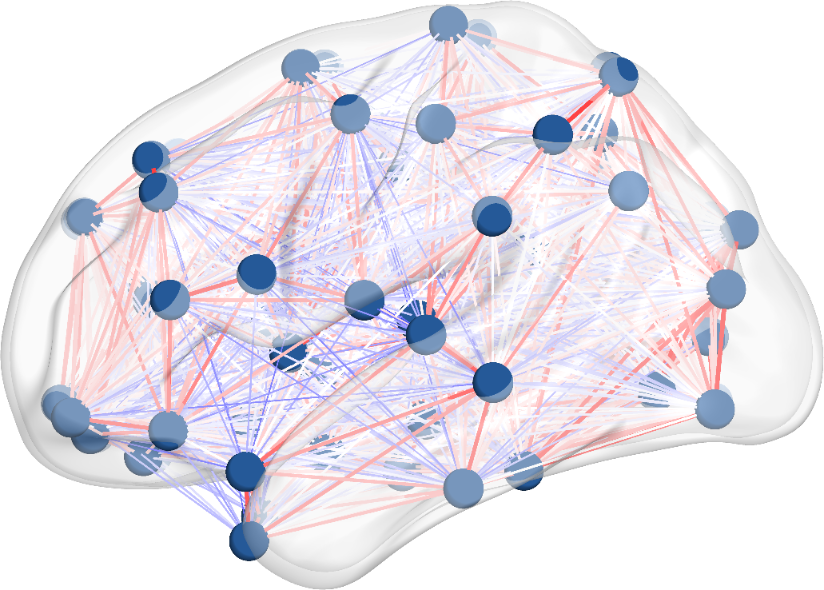}
     \end{subfigure}
     \hfill
     \begin{subfigure}[t]{0.325\linewidth}
         \centering
         \includegraphics[width=0.75\textwidth,height=0.75\textwidth]{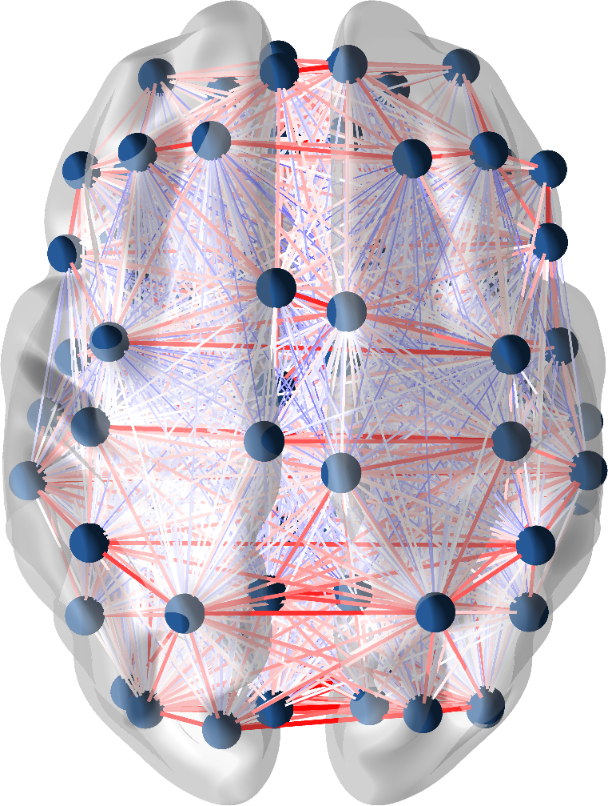}
     \end{subfigure}
     \hfill
     \begin{subfigure}[t]{0.325\linewidth}
         \centering
         \includegraphics[width=\textwidth]{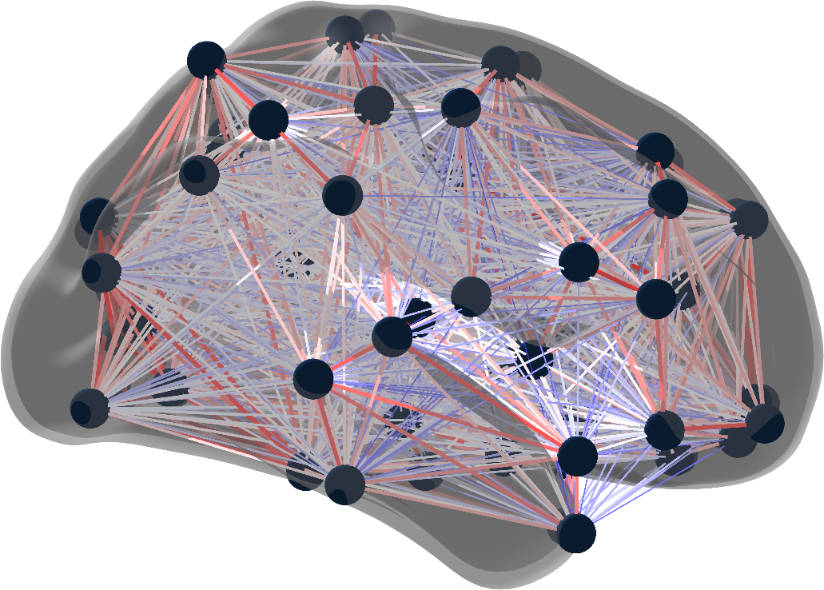}
     \end{subfigure}
      \begin{subfigure}[t]{1\linewidth}
         \centering
         \includegraphics[width=0.75\textwidth]{cb_use1.png}
     \end{subfigure}
    \caption{Top: The functionality connectivity after heat kernel expansion with bandwidth $t=0.01$ of the mean connectivity for users. Observe that spurious isolated connections are attenuated (connectivity reduced) while coherent bundles are enhanced. Bottom: The functionality connectivity after heat kernel expansion with bandwidth $t=0.05$ of the mean connectivity for non-users. Observe that spurious isolated connections are attenuated (connectivity reduced) while coherent bundles are enhanced.}
    \label{fig:mta-smooth-non-use2}
\end{figure*}

\subsection{Statistical Results}
The method is applied to determine whether there is topological dissimilarity between the functional brain networks of users and non-users. There are 62 users and 63 non-users. The comparison is made at three levels of sample sizes: 5 samples, 10 samples and 25 samples. The sample sizes are equal within each group. To test for the existence of topological dissimilarity for each of these samples, we generated 1M permutations with 50 independent repetitions to access the standard error. For each permutation, the test statistic \eqref{eqn:total-top-loss-test-statistic} is recorded. Figure~\ref{fig:convg-pvals-5samples} shows the p-value convergence plot and the distribution of the test statistics for 5 samples. The p-value obtained is $0.0.0885 \pm 0.0003$ indicating the presence of topological dissimilarity. 
\begin{figure}[ht]
\centering
 \begin{subfigure}[b]{0.45\textwidth}
     \centering
     \includegraphics[width=\textwidth]{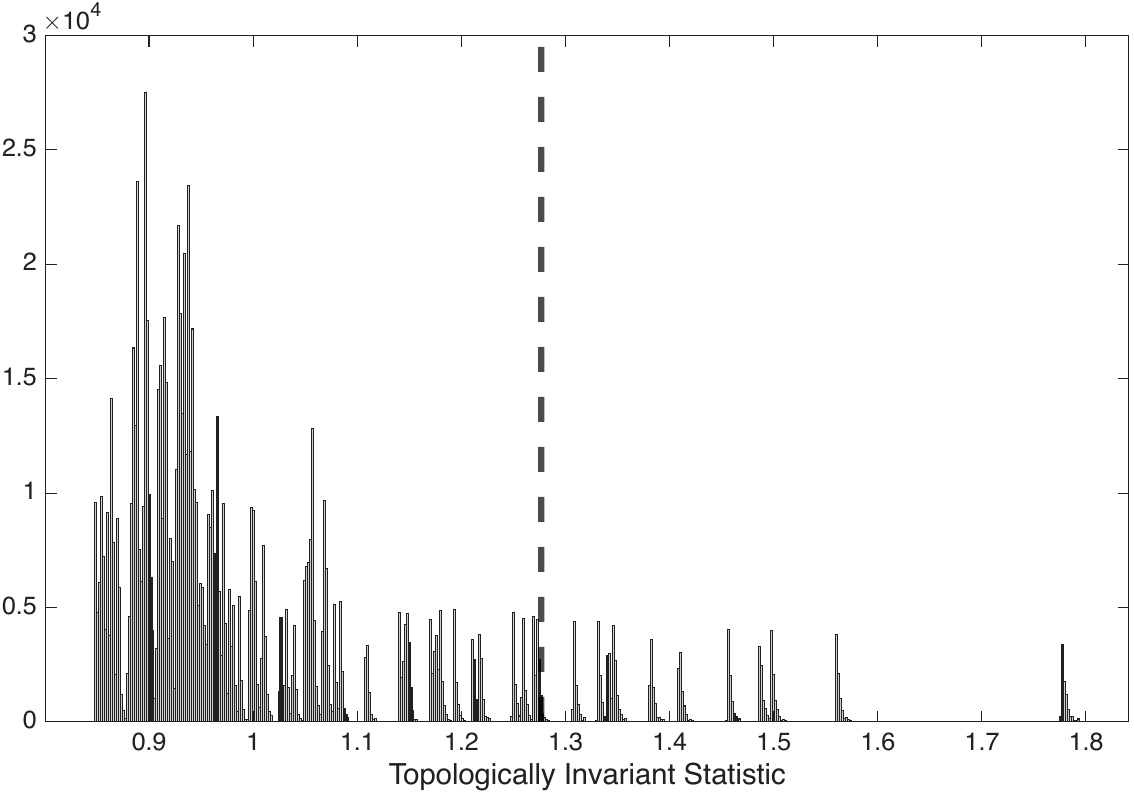}
 \end{subfigure}
 \hspace{1em}
 \begin{subfigure}[b]{0.45\textwidth}
     \centering
     \includegraphics[width=\textwidth]{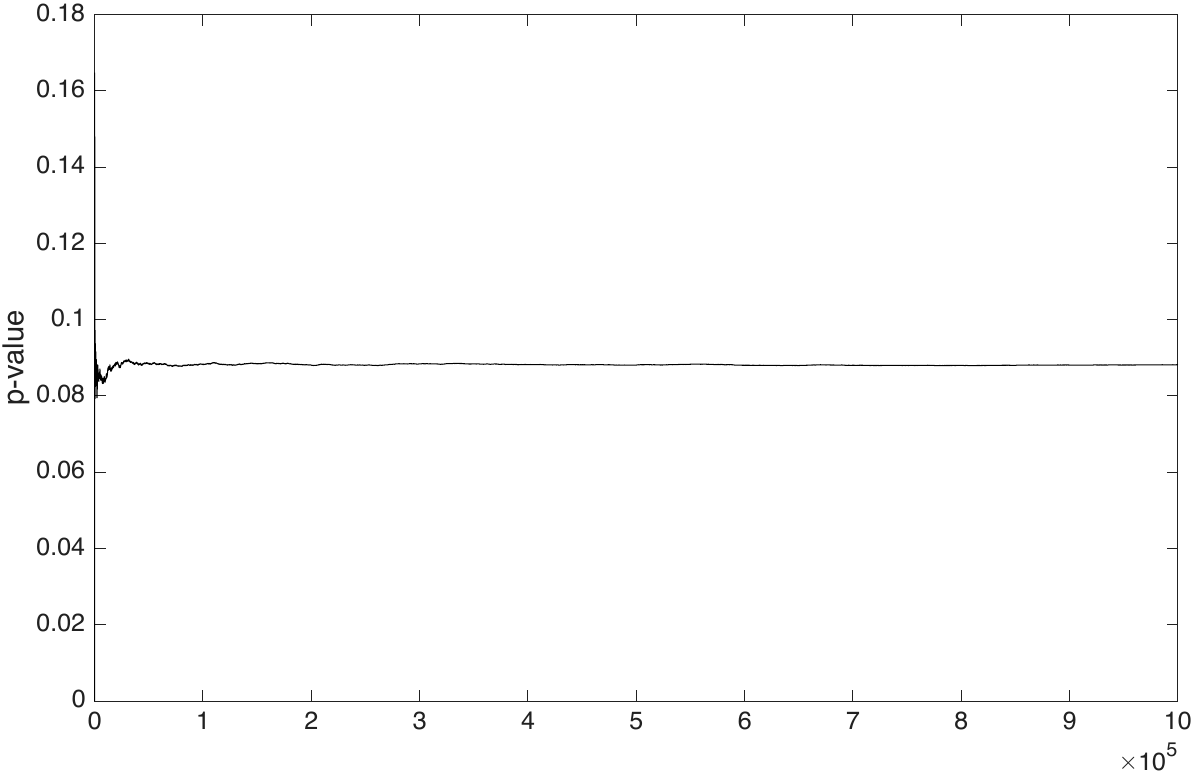}
\end{subfigure}
\caption{Left: The empirical distribution of the test statistic for the 5 samples comparison. The dashed vertical line indicates the test statistic. Right: The comparison p-value after 1M permutations with 50 independent repetitions. The p-value obtained is $0.0885 \pm 0.0003$, and the smooth horizontal line indicates convergence. }
\label{fig:convg-pvals-5samples}
\end{figure}

Similarly, Figure~\ref{fig:convg-pvals-10samples} and Figure~\ref{fig:convg-pvals-25samples} shows the p-value convergence plot and the distribution of the test statistics for 10 samples and 25 samples respectively. The p-value obtained is $0.0210 \pm 0.0002$ and $0.0756 \pm 0.0003$ for the 10 samples and 25 samples respectively indicating the test procedure successfully detected topological inequivalence in both samples. 

\begin{figure}[ht]
\centering
 \begin{subfigure}[b]{0.45\textwidth}
     \centering
     \includegraphics[width=\textwidth]{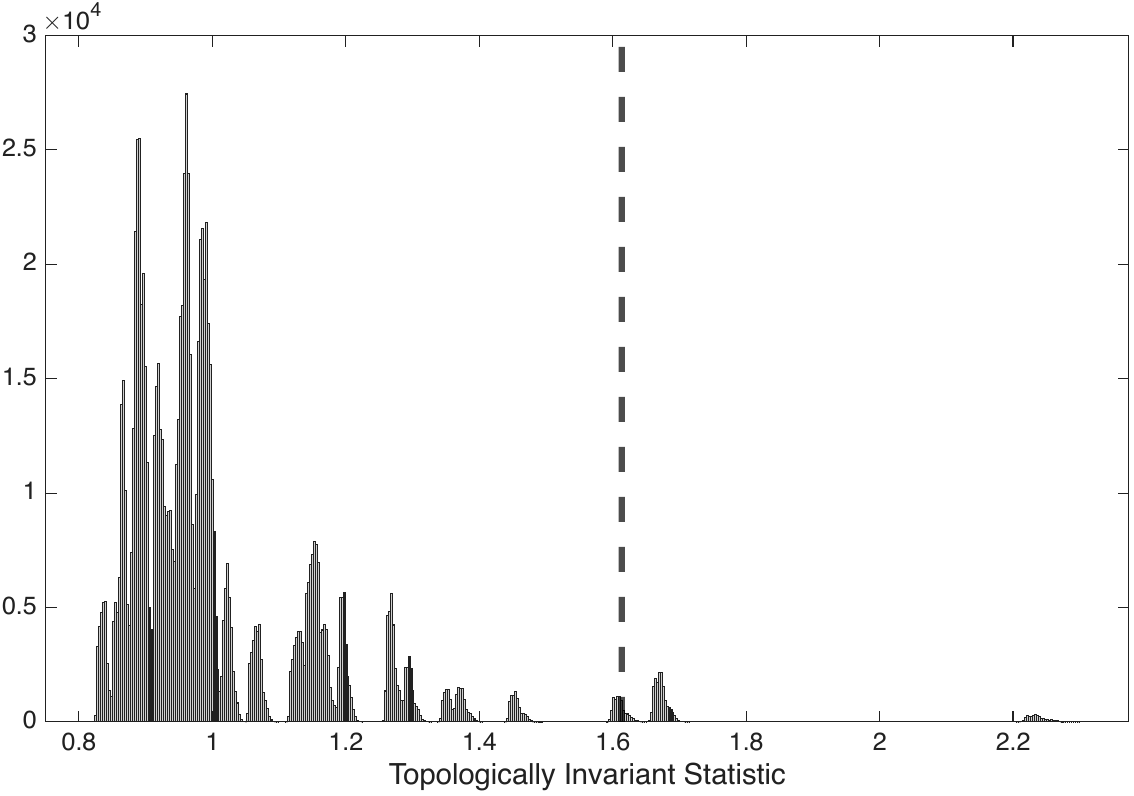}
 \end{subfigure}
 \hspace{1em}
 \begin{subfigure}[b]{0.45\textwidth}
     \centering
     \includegraphics[width=\textwidth]{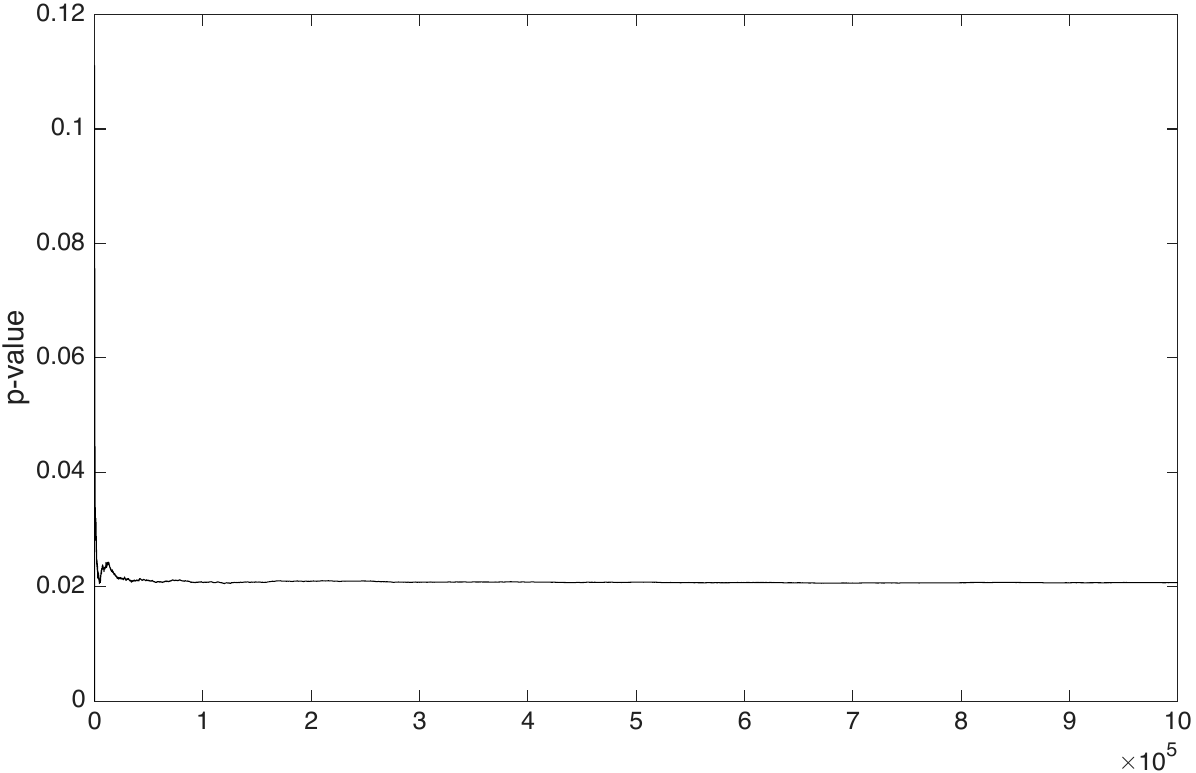}
\end{subfigure}
\caption{Left: The empirical distribution of the test statistic for the 10 samples comparison. The dashed vertical line indicates the test statistic. Right: The comparison p-value after 1M permutations with 50 independent repetitions. The p-value obtained is $0.0210 \pm 0.0002$, and the smooth horizontal line indicates convergence. }
\label{fig:convg-pvals-10samples}
\end{figure}
\begin{figure}[ht]
\centering
 \begin{subfigure}[b]{0.45\textwidth}
     \centering
     \includegraphics[width=\textwidth]{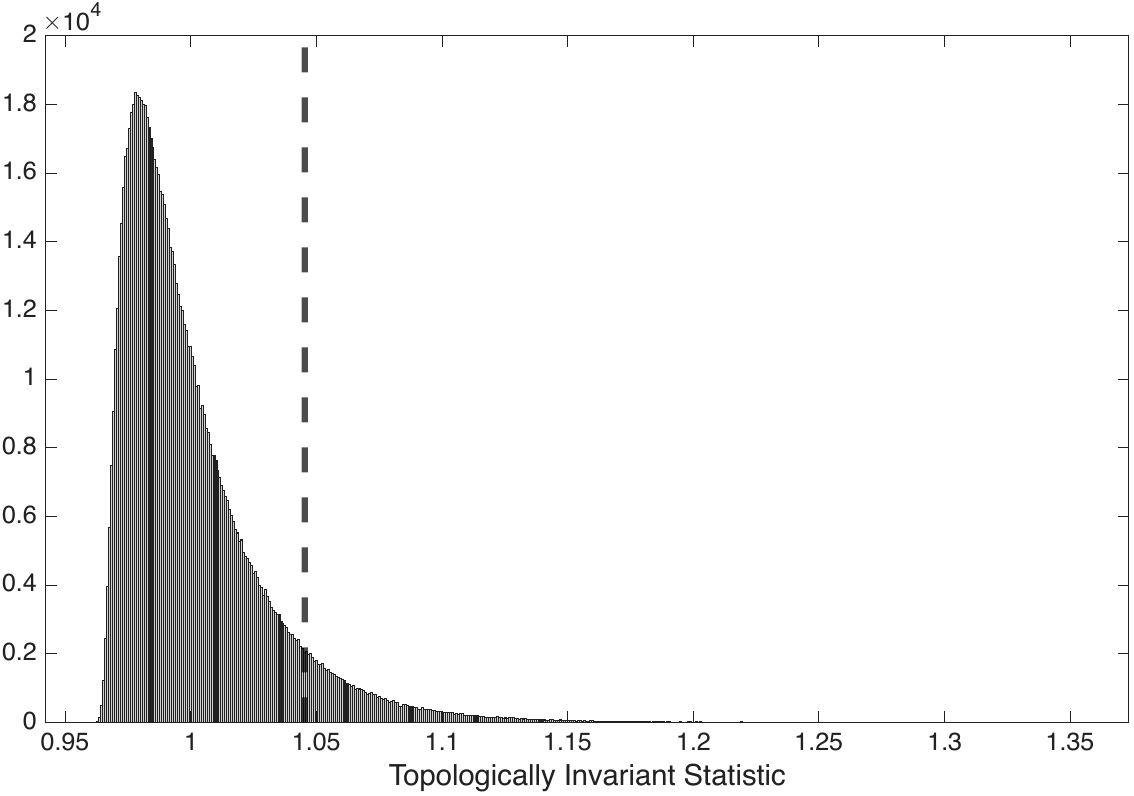}
 \end{subfigure}
 \hspace{1em}
 \begin{subfigure}[b]{0.45\textwidth}
     \centering
     \includegraphics[width=\textwidth]{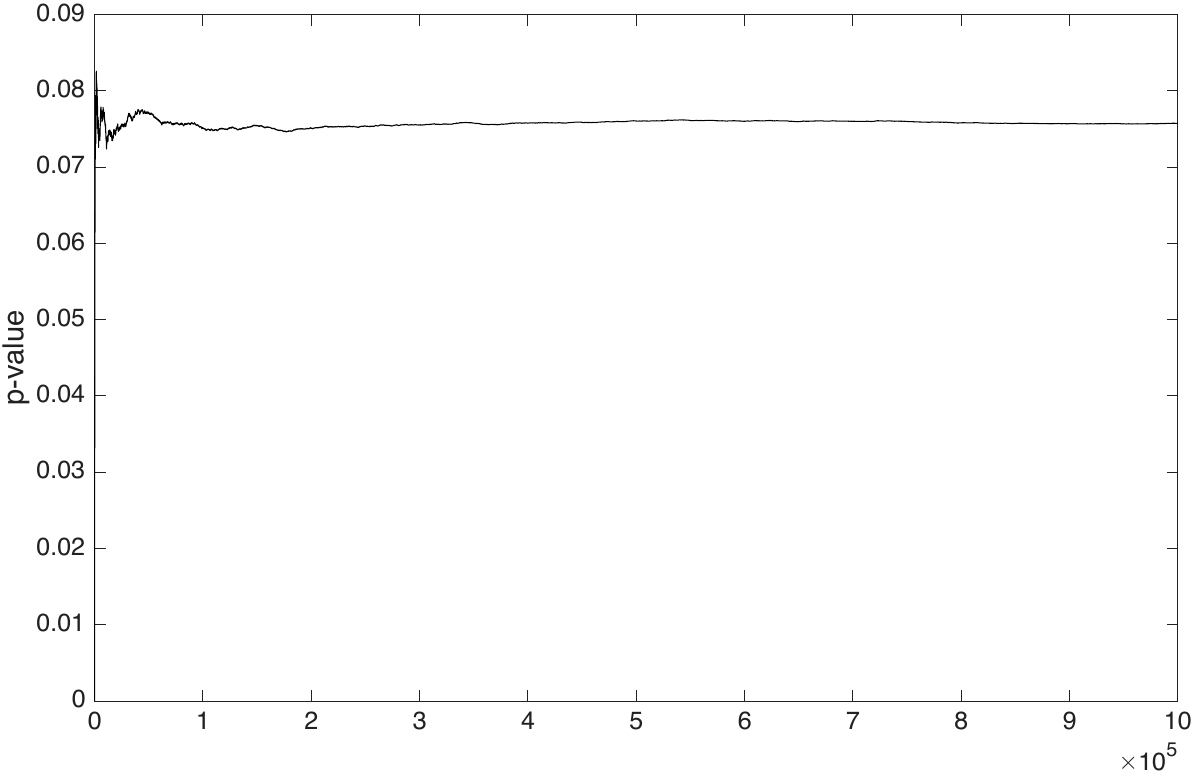}
\end{subfigure}
\caption{Left: The empirical distribution of the test statistic for the 5 samples comparison. The dashed vertical line indicates the test statistic. Right: The comparison p-value after 1M permutations with 50 independent repetitions. The p-value obtained is $0.0756 \pm 0.0003$, and the smooth horizontal line indicates convergence. }
\label{fig:convg-pvals-25samples}
\end{figure}

\section{Discussion \& Conclusion}
\label{sec:disc}

This work introduces a topologically invariant permutation test for comparing brain networks through persistent homology, combined with heat kernel expansion on the Hodge Laplacian for reducing topological variability. The invariance enables a resampling scheme that preserves individual network topologies while generating valid null distributions for testing topological equivalence between groups. The test statistic based on $2$-Wasserstein distances provides closed-form computation without expensive optimal transport solvers, making the approach computationally feasible for neuroimaging applications.

The heat kernel expansion on the Hodge Laplacian offers principled smoothing of connectivity matrices with theoretical guarantees on variance reduction. Unlike ad-hoc filtering procedures, the bandwidth parameter $t$ and expansion degree $d$ provide intuitive control over smoothing intensity while maintaining topological features through Hilbert space projection. Our simulations demonstrate that this smoothing enhances the statistical power of the topologically invariant test, particularly for networks with low connectivity density or subtle modular structure. The application to cannabis use in the ADHD population reveals significant topological differences that manifest as altered modular organization and cycle structure, demonstrating the method's sensitivity to biologically meaningful network alterations. By permuting birth and death values separately rather than entire networks, our approach directly tests topological equivalence while accommodating individual differences in connectivity patterns. The superior performance compared to baseline two-sample tests and alternative distance metrics (bottleneck, Gromov-Hausdorff, $\ell_p$ norms) validates this choice across diverse simulation scenarios.

Several limitations warrant consideration. First, the detected topological differences in cannabis users do not establish causality, as pre-existing network differences may have influenced substance use patterns. Longitudinal studies with pre- and post-exposure measurements would be needed to address this question. Second, the choice of atlas parcellation influences network topology by defining node locations and sizes. While we expect qualitative findings to be robust across reasonable parcellations, systematic investigation of parcellation effects remains important. Third, heat kernel bandwidth selection currently relies on empirical tuning, though the theoretical framework provides guidance through the variance reduction guarantees. Developing principled, data-driven bandwidth selection methods would enhance the framework's accessibility and reproducibility.

Future extensions could incorporate higher-order homology ($\beta_2$ and beyond) to characterize cavities and higher-dimensional voids, though the biological interpretation of these features in brain networks requires further investigation. The framework could also be extended to dynamic networks, enabling characterization of how topology evolves during cognitive tasks, learning, or disease progression. Time-varying persistent homology offers a promising direction for such extensions.

In conclusion, this work provides a rigorous mathematical framework for topological analysis of brain networks that combines persistent homology, Hodge Laplacian theory, and topologically invariant statistical inference. The detection of significant topological differences in cannabis users' brain networks illustrates the practical value of this approach for understanding how environmental factors and behaviors alter neural organization. Topological methods will play an increasingly important role in characterizing brain function and dysfunction, offering insights that complement traditional graph-theoretic approaches while respecting the intrinsic geometry of neural connectivity.

\bibliographystyle{plain}
\bibliography{references}

\end{document}